\documentclass[leqno,11pt]{amsart}
\usepackage{amsmath,amssymb,amsfonts,amsthm}
\usepackage{amsfonts}
\usepackage{graphics}
\usepackage{graphicx}
\usepackage{subfigure}

\newtheorem{theorem}{Theorem}

\newtheorem{example}[theorem]{Example}

\newtheorem{proposition}[theorem]{Proposition}
\newtheorem{remark}[theorem]{Remark}

\begin{document}
\title[]{ Pricing  Bitcoin Derivatives under Jump-Diffusion Models}
\author[]{Pablo Olivares, Ryerson University}
\address{}

\begin{abstract}
In recent years cryptocurrency trading has captured the attention of practitioners and academics. The volume of the exchange with standard currencies has knoew a dramatic increasing of late. This paper addresses to the need of models describing a bitcoin-US dollar exchange dynamic and their use to evaluate European option having bitcoin as underlying asset.
\end{abstract}

\keywords{Bitcoin, Jump-diffusion, Mean-reverting, Esscher transform, FFT pricing method. }
\maketitle
\section{Introduction}
In recent years cryptocurrency trading has captured the attention of practitioners and academics. The volume in the exchange of the former with standard currencies has known a dramatic increasing of late. Due to the special circumstances in which the mining of the cryptocurrencies  take place and its lack of transparency, the dynamics of the rate of change is characterized by a high volatility and large random oscillations upon time. This situation introduces an extra degree of difficulty in the modeling of exchange data.\\
On the other hand, there is, an  informal but emerging market for derivatives based on cryptocurrencies. Evaluation of future contracts have recently appeared on some web sites. The market for more complex derivatives is at an incipient stage. Moreover, to our knowledge  the pricing of the latter has not been analyzed. \\
This paper addresses to the need of evaluating the latter. To this end, we propose a model for the dynamic of the exchange rates based on a mean-reverting exponential Levy process with jump-diffusion log-returns. We study empirical properties of the probability laws in bitcoin-US dollar exchanges and correlation, as well as parameter estimation from three different perspectives.  Next, we study the pricing of European options adapting well-known Fast Fourier Transform  techniques (FFT) for Levy process established in Car and Madan (1999) to this context.\\
The organization of the paper is the following:\\
In  section 2 we introduce the model, the risk-neutral setting and compute the characteristic function of the log-returns of the exchanges. In section 3 we specify these results for Merton(1976) and Kou(2002) jump-diffusion models. In section 4 we study empirical behavior of bitcoin-US dollar exchange data and parameter estimation. Finally, in section 5 we outline the pricing method, while in section 6 we conclude.
\section{Modeling bitcoin-US dollar exchange dynamic}
 Let  $(\Omega ,\mathcal{A}, (\mathcal{F}_{t})_{t \geq 0}, P)$ be a filtered probability space verifying the usual conditions. We denote  by $\mathcal{Q}$ an equivalent martingale measure(EMM) and by $E_{\mathcal{Q}}$ and $\varphi_X$ respectively the expected value and characteristic function  of a random variable $X$ under $\mathcal{Q}$. Furthermore, the function $l_V(u)=\frac{1}{t} \log \varphi_{V_{t}}(-iu)$ is the Laplace exponent of a Levy process  $(V_t)_{t \geq 0}$ defined on the space above.
   The symbol $\hat{f}$ denotes the Fourier transform of a function $f$, while $D^kf(u)$ or  $f^{(k)}$ denote its k-th derivative with respect to $u$. We set $Df:=D^1f$. For a random variable $X_t$ the expression $\tilde{X}_t= e^{-rt}X_t$ denotes its discounted value with respect to a contant interest rate $r>0$.\\
Let $(S_t)_{t \geq 0}$ be the bitcoin-US exchange rate process also defined on the same filtered probability space and $(Y_t)_{t \geq 0}$  its associate log-prices process. They are related by:
\begin{equation}\label{eq:priceslog}
    S_t= S_0 \\exp(Y_t)
\end{equation}
For the latter we assume a mean-reverting dynamic under  the historic measure $P$ given by:
\begin{eqnarray} \label{eq:bsjmultid4}
    dY_t&=&\alpha(\mu-Y_t)dt+  dV_t
\end{eqnarray}
where $(V_t)_{t \geq 0}$ is a Levy process, to be specified latter on, $\mu$ and $\alpha$ are the mean-reverting level and rate respectively.\\
The following propositions provide well-known results about the characteristic function of the log-prices under the historic probability and the EMM defined via an Esscher transform. See for example Eberlein and Raible(1999) and Gerber and Shiu(1994).\\
  In order to select the EMM for pricing purposes we take an Esscher transform of the historic measure $P$. See Gerber and Shiu(1994) for a rationale in terms of a utility-maximization criteria. \\
      For a stochastic process $(X_t)_{t \geq 0}$ we consider its Esscher transform:
      \begin{equation}\label{eq:esscher}
    \frac{d \mathcal{Q}^{\theta}_t}{d P_t}=\exp(\theta X_t-t l_X(\theta)),\; 0 \leq t \leq T,\; \theta \in \mathbb{R}
  \end{equation}
   where $P_t$ and $\mathcal{Q}^{\theta}_t$ are the respective restrictions of $P$ and $\mathcal{Q}^{\theta}$ to the $\sigma$-algebra $\mathcal{F}_t$.\\
      \begin{proposition}
    Let $(S_t)_{t \geq 0}$ be the process defined by equations with (\ref{eq:priceslog}) and (\ref{eq:bsjmultid4}). Let the process $(V_t)_{t \geq 0}$ have characteristic function $\varphi_{V_t}(u)$, Laplace exponent $l_V(u)$ under the probability $P$. \\
   Define by $\varphi^{\theta}_{V_t}$ and $ l^{\theta}_V(u)$  respectively the characteristic function and  Laplace exponent  of the process under the probability $\mathcal{Q}^{\theta}$ obtained by an Esscher transformation as given in equation (\ref{eq:esscher}). Then, the discounted price process $(\tilde{S}_t)_{t \geq 0}$ is a $\mathcal{Q}^{\theta}$-martingale  if for any $T>0$ the parameter $\theta$ verifies:
   \begin{equation}\label{eq:gershui}
    \int_0^{T} l^{\theta}_{V}( e^{-\alpha (T-s)})\;ds= rT-\mu(1-e^{-\alpha T})
   \end{equation}
   Moreover:
   \begin{eqnarray}\nonumber
        \varphi^{\theta}_{Y_t}(u)&=& \exp(i\mu (1-e^{-\alpha t})u-t l_V(\theta)+ I_t(u,\theta))\\ \label{eq:chfunlogprices}
        &&
      \end{eqnarray}
where:
    \begin{eqnarray*}
     I_t(u,\theta)&=& \int_0^t l_V(\theta+iu e^{-\alpha(t-s)})\;ds
  \end{eqnarray*}
   \end{proposition}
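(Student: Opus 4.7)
The plan is to first obtain an explicit representation of $Y_t$ by solving the linear SDE (\ref{eq:bsjmultid4}). Applying the product rule to $e^{\alpha t}Y_t$ gives $d(e^{\alpha t} Y_t)= \alpha\mu e^{\alpha t}\,dt + e^{\alpha t}\,dV_t$; integrating from $0$ to $t$ and using $Y_0=0$ (which is forced by (\ref{eq:priceslog}) at $t=0$) yields
\[
Y_t = \mu(1-e^{-\alpha t}) + \int_0^t e^{-\alpha(t-s)}\,dV_s,
\]
so $Y_t$ is a deterministic time-dependent shift plus a stochastic integral of a deterministic kernel against the L\'evy noise $V$.

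Next I would determine the law of $V$ under the Esscher measure $\mathcal{Q}^\theta$ defined by (\ref{eq:esscher}) with $X=V$. A direct computation $E_{\mathcal{Q}^\theta}[e^{uV_t}]=E_P[e^{(u+\theta)V_t-t l_V(\theta)}]=\exp\bigl(t[l_V(u+\theta)-l_V(\theta)]\bigr)$ shows that $V$ remains a L\'evy process under $\mathcal{Q}^\theta$ with Laplace exponent
\[
l^\theta_V(u)=l_V(u+\theta)-l_V(\theta).
\]
Combining this with the standard exponential-moment identity for deterministic integrands against a L\'evy process (obtained by approximation with simple functions and independence of increments),
\[
E_{\mathcal{Q}^\theta}\Bigl[\exp\Bigl(\int_0^t g(s)\,dV_s\Bigr)\Bigr]=\exp\Bigl(\int_0^t l^\theta_V(g(s))\,ds\Bigr),
\]
and substituting $g(s)=iu\,e^{-\alpha(t-s)}$, I obtain the characteristic function of $Y_t$ under $\mathcal{Q}^\theta$; rewriting the integral via $l^\theta_V(\cdot)=l_V(\cdot+\theta)-l_V(\theta)$ and factoring out the $-tl_V(\theta)$ term produces exactly (\ref{eq:chfunlogprices}).

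For the martingale condition, note that $(\tilde S_t)$ being a $\mathcal{Q}^\theta$-martingale requires $E_{\mathcal{Q}^\theta}[e^{Y_T-rT}]=1$ for each $T>0$, equivalently $\varphi^\theta_{Y_T}(-i)=e^{rT}$. Substituting $u=-i$ in the just-derived characteristic function (equivalently, taking $g(s)=e^{-\alpha(T-s)}$ in the exponential-moment identity) produces
\[
\mu(1-e^{-\alpha T})+\int_0^T l^\theta_V\bigl(e^{-\alpha(T-s)}\bigr)\,ds=rT,
\]
which rearranges to (\ref{eq:gershui}). The principal technical subtlety is justifying the exponential-moment identity at the real argument $u=-i$: this requires $l_V$ (hence $l^\theta_V$) to admit an analytic extension to a complex neighborhood of $\{\theta+e^{-\alpha(T-s)}:s\in[0,T]\}$, i.e.\ an exponential-integrability hypothesis on the L\'evy measure of $V$ which will hold automatically for the Merton and Kou specifications studied in the next section.
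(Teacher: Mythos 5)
Your proof follows essentially the same route as the paper's: solve the linear SDE by an integrating factor to obtain $Y_t=\mu(1-e^{-\alpha t})+\int_0^t e^{-\alpha(t-s)}\,dV_s$, identify $l^{\theta}_V(u)=l_V(u+\theta)-l_V(\theta)$ under the Esscher measure, and invoke the exponential-functional identity for deterministic integrands against a L\'evy process; substituting $g(s)=iu\,e^{-\alpha(t-s)}$ then gives (\ref{eq:chfunlogprices}) exactly as in the paper, and setting $u=-i$ gives (\ref{eq:gershui}). The one divergence is in the martingale step: you verify only the unconditional condition $E_{\mathcal{Q}^{\theta}}[e^{Y_T-rT}]=1$, which establishes (\ref{eq:gershui}) as a \emph{necessary} condition, whereas the proposition asserts it as \emph{sufficient}; the paper instead writes the conditional requirement $E_{\mathcal{Q}^{\theta}}(e^{W_t}\mid\mathcal{F}_u)=\exp(\mu(e^{-\alpha t}-e^{-\alpha u})+r(t-u))e^{W_u}$ for all $0\le u<t$ and then specializes to $u=0$, $t=T$. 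Be aware that neither argument is fully watertight on this point: since $W_t=e^{-\alpha(t-u)}W_u+\int_u^t e^{-\alpha(t-s)}\,dV_s$, the conditional expectation produces a factor $e^{e^{-\alpha(t-u)}W_u}$ rather than $e^{W_u}$, so the genuine martingale property cannot hold for $\alpha\neq 0$ and the paper's displayed equivalence silently treats $W$ as having stationary independent increments. Your version at least isolates the moment condition that the paper actually uses, and your closing remark on the need for an analytic extension of $l_V$ to justify evaluation at $u=-i$ makes explicit an integrability hypothesis the paper leaves implicit.
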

   \begin{proof}
    By Ito lemma, the solution of equation (\ref{eq:bsjmultid4}) is:
\begin{eqnarray}\label{eq:timechansol}
  Y_t&=&\mu (1-e^{-\alpha t}) + W_t
\end{eqnarray}
where $W_t=\int_0^t e^{-\alpha(t-s)}dV_s$.\\
We recall the following result  about the functional  of a Levy process $(\xi_t)_{t \geq 0}$ and a measurable function $f$:
\begin{equation}\label{eq:funclevy}
  E ( exp( i \int_0^t f(s)\;ds) )=exp(\int_0^t l_{\xi}(i f(s))\;ds)
\end{equation}
 Applied to the process $(W_t)_{t \geq 0}$ its characteristic function under the probability $\mathcal{Q}^{\theta}$ becomes:
\begin{eqnarray}  \label{eq:applevyk}
    \varphi^{\theta}_{W_t}(u)&=&  \exp(\int_0^t l^{\theta}_V(i u e^{-\alpha(t-s)}) ds)
 \end{eqnarray}
   By equation (\ref{eq:esscher}) combined with equations (\ref{eq:timechansol}) and (\ref{eq:applevyk}) the discounted process $(\tilde{S}_t)_{t \geq 0}$ is a $\mathcal{Q}^{\theta}$-martingale if and only if for any $0 \leq u < t $:
   \begin{eqnarray*}
    E_{\mathcal{Q}^{\theta}}(e^{W_t}/ \mathcal{F}_u)  &=& \exp(\mu(e^{-\alpha t}-e^{-\alpha u})+r(t-u))e^{W_u}\\
    \Leftrightarrow && \varphi^{\theta}_{W_{t-s}}(-i)=\exp(\mu(e^{-\alpha t}-e^{-\alpha u})+r(t-u))\\
   \Leftrightarrow && \int_0^{t-u} l^{\theta}_{V}( e^{-\alpha (t-s)})\;ds=\mu (e^{-\alpha t}-e^{-\alpha u})+r(t-u)
   \end{eqnarray*}
   In particular for $t=T$ and $u=0$ we have the result in equation (\ref{eq:gershui}).\\
   For the second part of the proposition we  simplify the notations and write $\mathcal{Q}^{\theta}:= \mathcal{Q}$.\\
   Next, notice that:
   \begin{eqnarray*}
    \varphi^{\theta}_{V_t}(u) &=&  E(e^{i u V_t} e^{\theta V_t-t l_V(\theta)})=\frac{\varphi_{V_t}(u- i \theta)}{\varphi_{V_t}(- i \theta)}\\
   \end{eqnarray*}
   and $ l^{\theta}_{V}(u) = l_V(u+\theta)-l_V(\theta)$.\\
  From equations (\ref{eq:timechansol}) and (\ref{eq:applevyk}):
    \begin{eqnarray*}
    \varphi^{\theta}_{Y_t}(u) &=& \exp \left[i\mu (1-e^{-\alpha t})u-t l_V(\theta)\right]\exp \left(\int_0^t l_V(\theta+ iu e^{-\alpha(t-s)})\;ds \right)
     \end{eqnarray*}
  \end{proof}
      \begin{remark}
   Notice that the characteristic function under the probability $P$ is obtained from equation (\ref{eq:chfunlogprices}) taking $\theta=0$. To simplify we write $I_t(u)=I_t(u,0)$, $\varphi^{0}_{Y_t}=\varphi_{Y_t}$ and $\mathcal{Q}^{0}=P$, etc.
   \end{remark}
     Parametric estimation is based on the log-return series given by:
 \begin{equation}\label{eq:logretpri}
   X_{j \Delta}=\log \left( \frac{S_{(j+1) \Delta}}{S_{j \Delta}} \right)=Y_{(j+1) \Delta}-Y_{j \Delta}, \; j=1,2,\ldots,n
 \end{equation}
 where $\Delta>0$ is the frequency at which the data is registered, typically daily observations. Notice that, because of the mean-reverting property, the observations are independent but not equally distributed.\\
The characteristic function of the log-returns is obtained in the following proposition:
 \begin{proposition}\label{prop:logretchf}
Let the log-returns series defined by equation (\ref{eq:logretpri}). For a model following equations (\ref{eq:priceslog}) and (\ref{eq:bsjmultid4}) and under the Esscher transformation the characteristic function of j-th log-return $ X_{j \Delta}$ is:
\begin{eqnarray}\label{eq:charflogret}
 \varphi^{\theta}_{ X_{j \Delta}}(u) &=&  \exp(C_1(u)+C_2(u,\theta)+C_3(u,\theta))
\end{eqnarray}
where:
\begin{eqnarray*}
  C_1(u) &=& i u \mu e^{-\alpha j \Delta}(1-e^{-\alpha  \Delta})\\
  C_2(u,\theta) &=& \int_{j \Delta}^{(j+1) \Delta} l^{\theta}_V(i u e^{-\alpha ((j+1) \Delta-s)})\;ds \\
  C_3(u,\theta) &=& \int_0^{j \Delta} l^{\theta}_V(i u (e^{-\alpha \Delta}-1 )e^{-\alpha( j \Delta- s)})\;ds
\end{eqnarray*}
\end{proposition}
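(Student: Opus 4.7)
The plan is to build the characteristic function directly from the explicit representation of $Y_t$ that was derived in the preceding proposition, and then exploit the independence of disjoint Levy increments to factor the expectation.

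First I would substitute the representation $Y_t = \mu(1-e^{-\alpha t}) + W_t$ with $W_t = \int_0^t e^{-\alpha(t-s)}\,dV_s$ into the definition $X_{j\Delta} = Y_{(j+1)\Delta} - Y_{j\Delta}$. The deterministic piece contributes $\mu(e^{-\alpha j\Delta} - e^{-\alpha(j+1)\Delta}) = \mu e^{-\alpha j\Delta}(1-e^{-\alpha\Delta})$, which after multiplication by $iu$ inside the exponent yields precisely $C_1(u)$. The stochastic piece $W_{(j+1)\Delta}-W_{j\Delta}$ I would split as
\begin{equation*}
W_{(j+1)\Delta}-W_{j\Delta} = \int_{j\Delta}^{(j+1)\Delta} e^{-\alpha((j+1)\Delta - s)}\,dV_s + \int_0^{j\Delta} \bigl(e^{-\alpha\Delta}-1\bigr) e^{-\alpha(j\Delta - s)}\,dV_s,
\end{equation*}
where the second term arises because the old weights $e^{-\alpha(j\Delta-s)}$ get rescaled by $e^{-\alpha\Delta}$ when we advance to time $(j+1)\Delta$. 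The two integrals are independent under $\mathcal{Q}^\theta$ because they are driven by $V$ over the disjoint intervals $[0, j\Delta]$ and $[j\Delta, (j+1)\Delta]$, and $V$ remains a Levy process under the Esscher change of measure.

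Next I would apply the functional identity (\ref{eq:funclevy}) to each of the two stochastic integrals, but with the Laplace exponent $l_V^\theta$ in place of $l_V$ since we are working under $\mathcal{Q}^\theta$. For the integral over $[j\Delta,(j+1)\Delta]$ with deterministic integrand $e^{-\alpha((j+1)\Delta-s)}$, this produces $C_2(u,\theta)$; for the integral over $[0,j\Delta]$ with integrand $(e^{-\alpha\Delta}-1)e^{-\alpha(j\Delta-s)}$, it produces $C_3(u,\theta)$. Combining the three exponents and using independence to turn the product of characteristic functions into a single exponential yields the claimed formula.

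The routine part is the algebra of the splitting; the only real subtlety is making sure to apply the identity (\ref{eq:funclevy}) under $\mathcal{Q}^\theta$ rather than $P$, which is legitimate because the Esscher transform of a Levy process is again a Levy process with Laplace exponent $l_V^\theta(u) = l_V(u+\theta)-l_V(\theta)$, a fact already recorded in the proof of the previous proposition. No further technical obstacle is expected.
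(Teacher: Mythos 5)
Your proposal is correct and follows essentially the same route as the paper: the same splitting of $W_{(j+1)\Delta}-W_{j\Delta}$ into independent integrals over $[0,j\Delta]$ and $[j\Delta,(j+1)\Delta]$, the same use of independent increments to factor the expectation, and the same application of the L\'evy functional identity (under $\mathcal{Q}^{\theta}$, with $l_V^{\theta}$) to produce $C_2$ and $C_3$. Your explicit remark that the Esscher transform preserves the L\'evy property is a welcome clarification that the paper leaves implicit.
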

\begin{proof}
From equation (\ref{eq:timechansol}) we have:
\begin{eqnarray}\nonumber
  X_{j \Delta}&=& \mu e^{-\alpha j \Delta}(1-e^{-\alpha  \Delta}) + e^{-\alpha (j+1) \Delta}\int_0^{(j+1) \Delta} e^{\alpha s}\;dV_s- e^{-\alpha j \Delta} \int_0^{j \Delta} e^{\alpha s}\;dV_s\\ \nonumber
   &=&\mu e^{-\alpha j \Delta}(1-e^{-\alpha  \Delta}) + e^{-\alpha (j+1) \Delta}\int_{j \Delta}^{(j+1) \Delta} e^{\alpha s}\;dV_s+ e^{-\alpha j \Delta}(e^{-\alpha \Delta}-1 ) \int_0^{j \Delta} e^{\alpha s}\;dV_s\\ \label{eq:mrevlogret}
   &&
\end{eqnarray}
Hence, noting that $(W_t)_{t \geq 0}$ has independent increments:
\begin{eqnarray*}
 \varphi^{\theta}_{ X_{j \Delta}}(u) &=& E_{\mathcal{Q}} \left[\exp \left(i u(\mu e^{-\alpha j \Delta}(1-e^{-\alpha  \Delta})+e^{-\alpha (j+1) \Delta}\int_{j \Delta}^{(j+1) \Delta} e^{\alpha s}\;dV_s)\right) \right.\\
&& \left. \exp \left(- e^{-\alpha j \Delta}(1-e^{-\alpha \Delta}) \int_0^{j \Delta} e^{\alpha s}\;dV_s \right) \right]\\
&=& \exp[i u(\mu e^{-\alpha j \Delta}(1-e^{-\alpha  \Delta}))]E_{\mathcal{Q}} [\exp(i u e^{-\alpha (j+1) \Delta}\int_{j \Delta}^{(j+1) \Delta} e^{\alpha s}\;dV_s)]\\
 &&E_{\mathcal{Q}}\left[ \exp(-iu e^{-\alpha j \Delta}(1-e^{-\alpha \Delta}) \int_0^{j \Delta} e^{\alpha s}\;dV_s)\right]
 \end{eqnarray*}
  The conclusion follows  from equation (\ref{eq:applevyk}).
\end{proof}
\section{A jump-diffusion model for bitcoin-US dollar exchange}
We consider  a  jump-diffusion dynamics  for the Levy noise $(V_t)_{t \geq 0}$ given by:
\begin{equation}\label{eq:bsjmultid3}
    V_t=\sigma  B_t+Z_t
\end{equation}
where $(B_t)_{t \geq 0}$ is a Brownian motion and the process $(Z_t)_{t \geq 0}$ is a homogeneous compound Poisson process, independent of $(B_t)_{t \geq 0}$, such that:
\begin{equation}\label{eq:jumeq}
    Z_t=\sum_{k=1}^{N_t} \xi_k
\end{equation}
The process $(N_t)_{t \geq 0}$ is a Poisson process with  intensity $\lambda>0$, while $(\xi_k)_{k \in \mathbb{N}}$ is a sequence of i.i.d. random variables with common characteristic function $\varphi_X$. Furthermore we assume the existence of the moments up to order $M$ of the jumps, i.e. $E(\xi^k_1) < +\infty, k=1,2, \ldots, M$ and $\varphi_{\xi_1} \in L^1(\mathbb{R})$.
\begin{remark}
This model includes  the case of a single homogeneous compound Poisson with Gaussian jumps, leading to the classical Merton's model, see Merton(1976), or double exponential jump sizes, see Kou(2002).
\end{remark}
Results in section 2 are easily adapted to this setting. Notice that for the model described by equations (\ref{eq:bsjmultid3}) and (\ref{eq:jumeq}):
\begin{eqnarray*}
l_V(\theta+iu e^{-\alpha(t-s)}) &=& (\frac{1}{2} \sigma ^2 \theta^2-\lambda)+i \sigma ^2 \theta {\mathrm{e}}^{-\alpha \,\left(t-s\right)}u-\frac{1}{2}\sigma ^2 {\mathrm{e}}^{-2 \alpha(t-s)}u^2\\
&+& \lambda \varphi_{\xi}(- i \theta+u e^{-\alpha(t-s)}) \\
I_t(u,\theta) &=& \left(\frac{1}{2}\sigma ^2\,{\mathrm{\theta}}^2-\lambda \right)t+ i \frac{\sigma ^2 \theta}{\alpha }(1-e^{-\alpha t})u\\
     &-& \frac{1}{4 \alpha }\sigma ^2 (1-e^{-2\alpha t})u^2+ \lambda  \int_0^t  \,\mathrm{\varphi_{\xi}}(\mathrm{-i \theta}+ u\,{\mathrm{e}}^{-\alpha \,(t-s)})\;ds
  \end{eqnarray*}
  Therefore:
\begin{eqnarray}\nonumber
 \varphi^{\theta}_{Y_t}(u) &=& \exp \left[-\lambda \varphi_{\xi}(\theta) t+ i \left(\frac{\mu+\sigma ^2 \theta}{\alpha } \right)(1-e^{-\alpha t})u \right. \\ \nonumber
 &-& \left. \frac{1}{4 \alpha }\sigma ^2 (1-e^{-2\alpha t})u^2+ \lambda  \int_0^t  \,\varphi_{\xi}(- i \theta+u e^{-\alpha(t-s)})\;ds \right] \\ \label{eq:charfunctjdiffesscher}
 &&
\end{eqnarray}
Moreover:
\begin{eqnarray*}
 \int_0^T l^{\theta}_V( e^{-\alpha(T-s)})\;ds &=&  \frac{\sigma ^2}{4 \alpha }(1-e^{-2\alpha T})+\frac{\sigma ^2 \theta}{\alpha }(1-e^{-\alpha T})\\
 &+& \lambda \int_0^T  \,\mathrm{\varphi_{\xi}}(\mathrm{\theta}+{\mathrm{e}}^{-\alpha \,(T-s)})\;ds-\lambda  \varphi_{\xi}(\theta)T
\end{eqnarray*}
From proposition 1, equation (\ref{eq:gershui}), $\theta$ solves the equation:
\begin{eqnarray*}
 &&\lambda \int_0^T \mathrm{\varphi_{\xi}}\left(\mathrm{\theta}+ {\mathrm{e}}^{-\alpha \,\left(T-s\right)}\right) \,d s = (\lambda \varphi_{\xi}(\theta)+r)T\\
 &-& (\mu+\frac{\sigma ^2 \theta}{\alpha })(1-e^{-\alpha T})-\frac{\sigma ^2}{4 \alpha }(1-e^{-2\alpha T})
 \end{eqnarray*}
Next, we compute the intermediate quantities:
 \begin{eqnarray*}
  C_2(u, \theta)&=&\int_{j \Delta}^{(j+1) \Delta} l_V(\theta+i u e^{-\alpha ((j+1) \Delta-s)})\;ds-l_V(\theta)\Delta\\
  &=& ( \frac{1}{2} \sigma ^2 \theta^2-\lambda-l_V(\theta))\Delta +i \frac{\sigma ^2 \theta}{\alpha} (1-{\mathrm{e}}^{-\alpha \Delta})u\\
&-&\frac{\sigma ^2}{4 \alpha} (1-{\mathrm{e}}^{-2 \alpha \Delta})u^2+\lambda  \int_{j \Delta}^{(j+1) \Delta} \varphi_{\xi}(-i\theta+u e^{-\alpha((j+1)\Delta-s)})\;ds \\
  \end{eqnarray*}
 \begin{eqnarray*}
  C_3(u, \theta)&=& \int_{0}^{j \Delta} l_V(\theta+i u e^{-\alpha ((j+1) \Delta-s)})\;ds-l_V(\theta)j \Delta\\
   &=& (\frac{1}{2} \sigma ^2 \theta^2-\lambda-l_V(\theta))j \Delta-i \frac{\sigma ^2 \theta}{\alpha} (1-{\mathrm{e}}^{-\alpha j\Delta})(1-e^{-\alpha \Delta})u\\
&-&\frac{\sigma ^2 }{4 \alpha}(1-{\mathrm{e}}^{-2 \alpha j \Delta })(1-e^{-\alpha \Delta})^2u^2+\lambda \int_0^{j \Delta}  \varphi_{\xi}(-i\theta+u (e^{-\alpha \Delta}-1)e^{-\alpha(j \Delta-s)})\;ds
\end{eqnarray*}
 Hence, from proposition \ref{prop:logretchf}  we have:
\begin{eqnarray}\nonumber
   \varphi^{\theta}_{X_{j \Delta}}(u)  &=&  \exp \left[ \lambda  (K_1(u,\theta)+K_2(u,\theta)) \right.\\ \nonumber
  &-& \left.\lambda \varphi_{\xi}(\theta)(j+1)\Delta +i (\mu+\frac{\sigma ^2 \theta}{\alpha}) e^{-\alpha j \Delta}(1-{\mathrm{e}}^{-\alpha \Delta})u-\frac{\sigma ^2}{4 \alpha} E_{j,2}(\alpha)u^2 \right] \\ \label{eq:chfjumpdiff2}
    &&
 \end{eqnarray}
where:
\begin{eqnarray*}
  K_1(u,\theta) &=&  \int_{j \Delta}^{(j+1) \Delta} \varphi_{\xi}(-i\theta+u e^{-\alpha((j+1)\Delta-s)})\;ds \\
   K_2(u,\theta) &=& \int_0^{j \Delta}  \varphi_{\xi}(-i\theta+u (e^{-\alpha \Delta}-1)e^{-\alpha(j \Delta-s)})\;ds\\
    E_{j,k}(\alpha)&=&(1-e^{-k \alpha \Delta})+(-1)^k (1-e^{-\alpha \Delta})^k(1-e^{-k \alpha j \Delta} )
\end{eqnarray*}
In particular for $\theta=0$:
\begin{eqnarray}\nonumber
   \varphi_{X_{j \Delta}}(u)  &=&  \exp \left[ \lambda  (K_1(u)+K_2(u))-\lambda (j+1)\Delta \right.\\ \nonumber
  &+&\left.  i \mu e^{-\alpha j \Delta}(1-{\mathrm{e}}^{-\alpha \Delta})u-\frac{\sigma ^2}{4 \alpha} E_{j,2}(\alpha)u^2 \right] \\ \label{eq:chfjumpdiffzero}
    &&
 \end{eqnarray}
\begin{example}\textit{Mean-reverting Black-Scholes  model}\\
Although it is clear from the empirical analysis in section 4 below that a mean-reverting Black-Scholes model does not capture the dynamic of bitcoin-US dollar exchange rate, nonetheless we consider the latter for comparison.\\
 To this end we set $V_t=\sigma B_t$. Hence:
\begin{eqnarray*}
\varphi^{\theta}_{Y_t}(u) &=& \exp \left( i \left(\frac{\sigma ^2 \theta}{\alpha }+\mu \right)(1-e^{-\alpha t})u- \frac{1}{4 \alpha }\sigma ^2 (1-e^{-2\alpha t})u^2 \right)
\end{eqnarray*}
Therefore, the Gerber-Shui parameter $\theta$ solves:
\begin{eqnarray*}
 r T-(\frac{\sigma ^2 \theta}{\alpha }+\mu)(1-e^{-\alpha T})-\frac{\sigma ^2}{4 \alpha }(1-e^{-2\alpha T})&=&0
 \end{eqnarray*}
 Hence:
\begin{eqnarray*}
  \theta &=& \frac{\alpha}{\sigma^2} \left( rT (1-e^{-\alpha T})^{-1}-\frac{\sigma ^2}{4 \alpha }(1+e^{-\alpha T})-\mu\right)
\end{eqnarray*}
and
 \begin{eqnarray}\nonumber
   \varphi^{\theta}_{X_{j \Delta}}(u)  &=&  \exp \left[i (\mu+\frac{\sigma ^2 \theta}{\alpha}) e^{-\alpha j \Delta}(1-{\mathrm{e}}^{-\alpha \Delta})u- \frac{\sigma ^2}{4 \alpha} E_{j,2}(\alpha) u^2 \right]
  \end{eqnarray}
\end{example}
  \begin{example}\textit{Mean-reverting jump-diffusion model with Gaussian jumps.}\\
We assume  $\xi_k \sim N(\mu_J, \sigma^2_J)$. Then:
\begin{eqnarray}\nonumber
\varphi_{\xi}(u)&=& \exp(i \mu_J u-\frac{1}{2}\sigma^2_J u^2)\\ \nonumber
\int_0^t \varphi_{\xi}(-i\theta+u e^{-\alpha(t-s)})\;ds &=& \varphi_{\xi}(-i \theta)\int_0^t \varphi_{\xi}(u e^{-\alpha(t-s)}) \exp(-i\sigma^2_J \theta u e^{-\alpha(t-s)})\;ds\\  \nonumber
  &=& \frac{\varphi_{\xi}(-i \theta)}{\alpha} A_1(u,-i \sigma^2_J \theta u,t)\\ \label{eq:intcharfun}
  &&
  \end{eqnarray}
  after the change of variable  $y= e^{-\alpha(t-s)}$, where:
  \begin{eqnarray*}
   A_1(u,v,t)=\int_{e^{-\alpha t}}^1 y^{-1} \varphi_{\xi}(u y) \exp(-v y)\;ds
  \end{eqnarray*}
  Therefore, combining equations (\ref{eq:charfunctjdiffesscher}) and (\ref{eq:intcharfun}):
   \begin{eqnarray*}
   \varphi^{\theta}_{Y_t}(u) &=& \exp \left[-\lambda \varphi_{\xi}(\theta)t+ i \left(\frac{\sigma ^2 \theta}{\alpha }+\mu \right)(1-e^{-\alpha t})u - \frac{1}{4 \alpha }\sigma ^2 (1-e^{-2\alpha t})u^2 \right. \\ \nonumber
  &+&  \left. \frac{\lambda \varphi_{X}(-i \theta)}{\alpha} A_1(u,i \sigma^2_J \theta u,t) \right]
    \end{eqnarray*}
  Similar calculations lead to:
   \begin{eqnarray*}
     \int_0^T \varphi_{\xi}(\theta+u e^{-\alpha(T-s)})\;ds &=&  \frac{\varphi_{\xi}(\theta)}{\alpha}A_1(1,\sigma^2_J \theta u,T)
   \end{eqnarray*}
        The Gerber-Shui coefficient $\theta_{GS}$ satisfies:
   \begin{eqnarray*}
     \lambda \varphi_{\xi}(\theta) (\alpha T-A_1(1,\sigma^2_J \theta,T))-(\sigma^2 \theta+\alpha \mu)(1-e^{-\alpha T}) &=& \frac{\sigma^2}{4}(1-e^{-2\alpha T})-\alpha r T
   \end{eqnarray*}
   Finally, the characteristic function under the probability $\mathcal{Q}^{\theta}$ of the log-returns is written:
   \begin{eqnarray*}\nonumber
   \varphi^{\theta}_{X_{j \Delta}}(u)  &=&  exp \left[\lambda \frac{\varphi_{\xi}(-i \theta)}{\alpha} \left[ A_1(u,i \sigma^2_J \theta u,(j+1)\Delta) -  A_1(u,i\sigma^2_J \theta u,j\Delta) \right] \right.\\ \nonumber
   &+& \lambda \frac{\varphi_{\xi}(-i \theta)}{\alpha}  A_1(u(e^{-\alpha \Delta}-1),i\sigma^2_J \theta(e^{-\alpha \Delta}-1) u,j \Delta)\\ \nonumber
 &-& \left. \lambda \varphi_{\xi}(-i \theta)(j+1)\Delta +i (\mu+\frac{\sigma ^2 \theta}{\alpha}) e^{-\alpha j \Delta}(1-{\mathrm{e}}^{-\alpha \Delta})u - \frac{\sigma ^2}{4 \alpha} E_{j,2}(\alpha) u^2 \right]
 \end{eqnarray*}
  \end{example}
  \begin{example}\textit{Mean-reverting jump-diffusion model with double exponential jumps.}\\
In the case of the Kuo model, the common p.d.f. of the jump sizes is described by:
\begin{eqnarray*}\nonumber
f_{X}(x) &=& q \eta_1 e^{-\eta_1 x}1_{\{x  \geq 0\}} + (1-q)\eta_2 e^{\eta_2 x}1_{\{x < 0\}} \hspace{7mm} \eta_1 >1,\eta_2 >0 \\ \label{eq:marginkou1}
&&
\end{eqnarray*}
where $q$ and $1-q$ represent the respective probabilities of upward and downward jumps.
The characteristic function of the jumps is:
\begin{eqnarray*}
  \varphi_{X}(u)  &=& \lambda \left(\frac{q \eta_1}{\eta_1-iu}+\frac{(1-q) \eta_2}{\eta_2+iu} \right)
\end{eqnarray*}
Hence:
\begin{eqnarray*}\nonumber
&& \int_0^t \varphi_{\xi}(-i \theta+u e^{-\alpha(t-s)})\;ds = q \eta_1 \int_0^t \frac{1}{\eta_1+u e^{-\alpha(t-s)}-i \theta}\;ds\\
&+& (1-q) \eta_2 \int_0^t \frac{ 1}{\eta_2+u e^{-\alpha(t-s)}+ \theta}\;ds\\
&=& \frac{q \eta_1}{\alpha} \int_{e^{-\alpha t}}^1 \frac{1}{y(\eta_1-iuy - \theta)}\;dy + \frac{(1-q) \eta_2}{\alpha} \int_{e^{-\alpha t}}^1 \frac{1}{y(\eta_2+ i u y- \theta)}\;dy\\
&=& \frac{q \eta_1}{\alpha}A_2(u,\theta,t)+ \frac{(1-q) \eta_2}{\alpha}A_3(u,\theta,t)
  \end{eqnarray*}
  where:
  \begin{eqnarray*}
      A_2(u,\theta,t)  &=& \int_{e^{-\alpha t}}^1 \frac{1}{y(\eta_1-i uy - \theta)}\;dy \\
    A_3(u,\theta,t)  &=& \int_{e^{-\alpha t}}^1 \frac{1}{y(\eta_2+ i u y+ \theta)}\;dy
  \end{eqnarray*}
  Then:
  \begin{eqnarray*}
   \varphi^{\theta}_{Y_t}(u) &=& exp \left[-\lambda \varphi_{\xi}(\theta) t+ i \left(\frac{\sigma ^2 \theta}{\alpha }+\mu \right)(1-e^{-\alpha t})u \right. \\ \nonumber
 &-& \left. \frac{1}{4 \alpha }\sigma ^2 (1-e^{-2\alpha t})u^2+ \lambda  \frac{q \eta_1}{\alpha}A_2(u,\theta,t)+  \lambda \frac{(1-q) \eta_2}{\alpha}A_3(u,\theta,t)\right] \\
    \end{eqnarray*}
  Similar calculations lead to:
   \begin{eqnarray*}
  \int_0^T \varphi_{\xi}( e^{-\alpha(T-s)})\;ds &=& \frac{q \eta_1}{\alpha} A_2(-i,\theta,T)+\frac{(1-q) \eta_2}{\alpha}A_3(i, \theta, T)
   \end{eqnarray*}
   Therefore, the Gerber-Shui coefficient $\theta_{GS}$ verifies:
    \begin{eqnarray*}
&& \frac{q \eta_1}{\alpha} A_4(-i,\theta,T)+\frac{(1-q) \eta_2}{\alpha}A_5(i, \theta, T)\\
  &=&(\lambda \varphi_{\xi}(\theta)+r)T-(\frac{\sigma ^2 \theta}{\alpha }+\mu)(1-e^{-\alpha T})-\frac{\sigma ^2}{4 \alpha }(1-e^{-2\alpha T})
 \end{eqnarray*}
 Finally:
 \begin{eqnarray}\nonumber
   \varphi^{\theta}_{X_{j \Delta}}(u)  &=&  exp \left[\frac{q \eta_1}{\alpha}[A_2(u,\theta,(j+1)\Delta)-A_2(u,\theta,j \Delta)]\right.\\ \nonumber
   &+& \frac{(1-q) \eta_2}{\alpha}[A_3(u,\theta,(j+1)\Delta)-A_3(u,\theta,j \Delta)]\\ \nonumber
   &+&  \frac{q \eta_1}{\alpha}A_2(u(e^{-\alpha \Delta}-1),\theta,j\Delta)+\frac{(1-q) \eta_2}{\alpha}A_3(u(e^{-\alpha \Delta}-1),\theta,j\Delta)\\ \nonumber
 &-& \left. \lambda \varphi_{\xi}(\theta)(j+1)\Delta +i (\mu+\frac{\sigma ^2 \theta}{\alpha}) e^{-\alpha j \Delta}(1-{\mathrm{e}}^{-\alpha \Delta})u
- \frac{\sigma ^2}{4 \alpha} E_{j,2}(\alpha) u^2 \right]  \\ \label{eq:chfjumpdiffkou}
    &&
 \end{eqnarray}
\end{example}
  \section{Parameter calibration}
 We present a brief  statistical analysis and estimate the parameters in the model given by equation (\ref{eq:bsjmultid4}). The analysis is based on a historic series of daily bitcoin-US dollar exchange rate  from quotations in BTC, expanding from January 2011 to June 2018. The data is taken from  www.candainvestment.com web site. In figure \ref{fig:tsexcahnges} we can see the corresponding exchange(left) and log-return(right)  series. Trade volume, volatility and large oscillations have dramatically increased in recent years, specially after 2017. \\
 We look at some empirical features. Daily closure exchange rates and log-return exchange rates first four moments are shown in table \ref{tab:descritive stat}. It reveals an asymmetric  probability distribution, skewed to the right,  with a remarkable high kurtosis.\\

\begin{table}[hb!]
  \centering
 \begin{tabular}{|c|c|c|c|c|}
  \hline
 series  & mean & volatility& skewness &kurtosis\\ \hline
Bitcoin-US exchange &     1444  &  2873.2 &   2.9  &  11.2 \\  \hline
log-returns & 0.0031 &   0.0752  &  3.0083  & 125.7292 \\ \hline
\end{tabular}
\caption{ Average, volatility, skewness and kurtosis of bitcoin US dollar exchange and the log-returns of the prices, January 2011-June 2018}\label{tab:descritive stat}
\end{table}

\begin{figure}
\begin{center}
\subfigure[]{
\resizebox*{6cm}{!}{\includegraphics{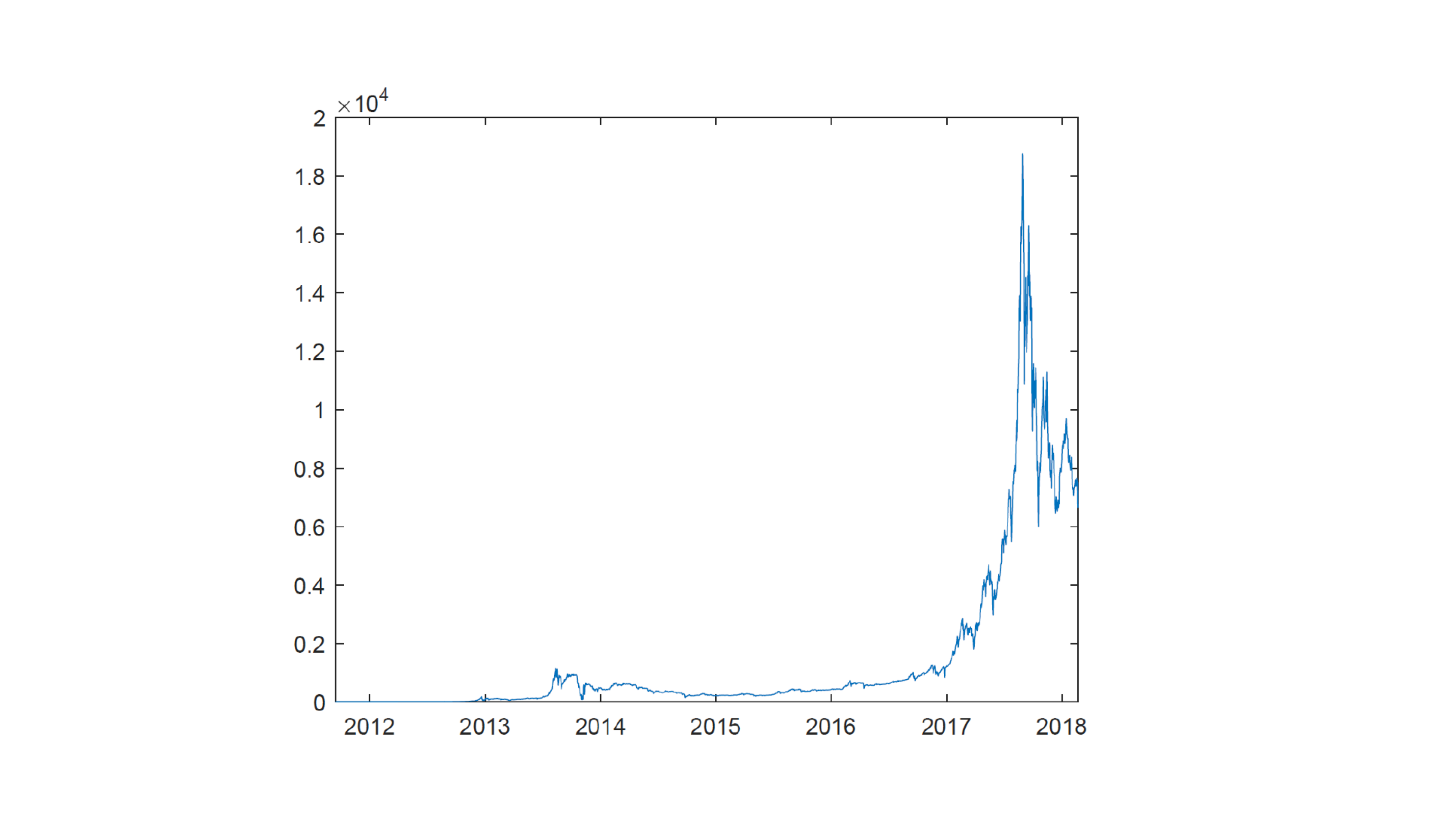}}}\hspace{5pt}
\subfigure[]{
\resizebox*{6cm}{!}{\includegraphics{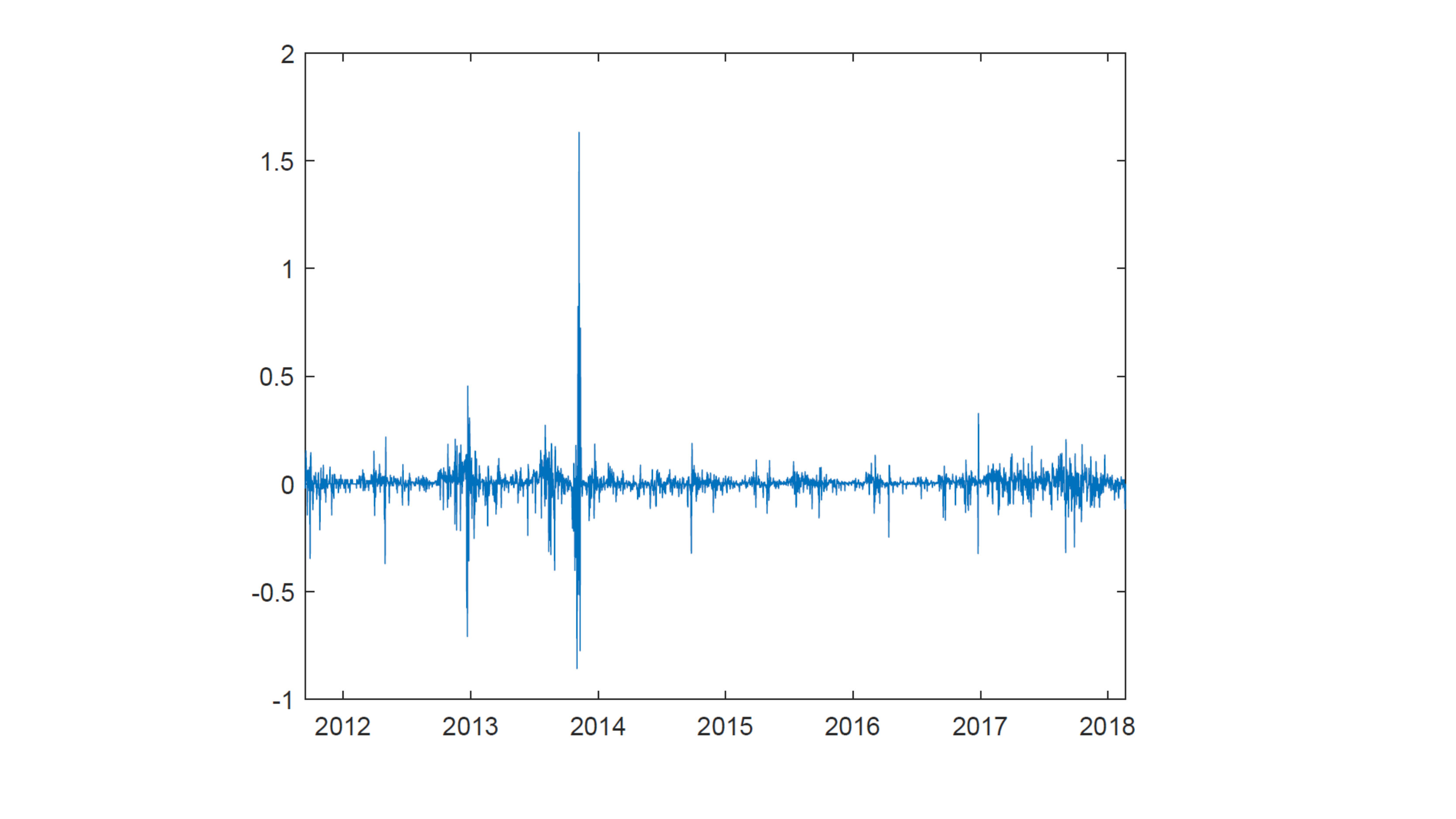}}}\hspace{5pt}
\caption{ }
\label{fig:tsexcahnges}
\end{center}
\end{figure}

In figure \ref{fig:autocorr} the autocorrelation series of log-returns(left) is shown. Most values lie within the zero confidence strip at 95\%. As it is common in most financial series the autocorrelation of the  squared log-returns(right) is significant for most relevant lags. It provides an argument of non-Gaussianity that is confirmed by  a Kolmogorov-Smirnov test for log-returns. It rejects the hypothesis of normality with a $p-value =4.2987 \times 10^{-50}$, statistics $k =0.1577$ and a critical value $c = 0.0256$.\\

\begin{figure}
\begin{center}
\subfigure[]{
\resizebox*{6cm}{!}{\includegraphics{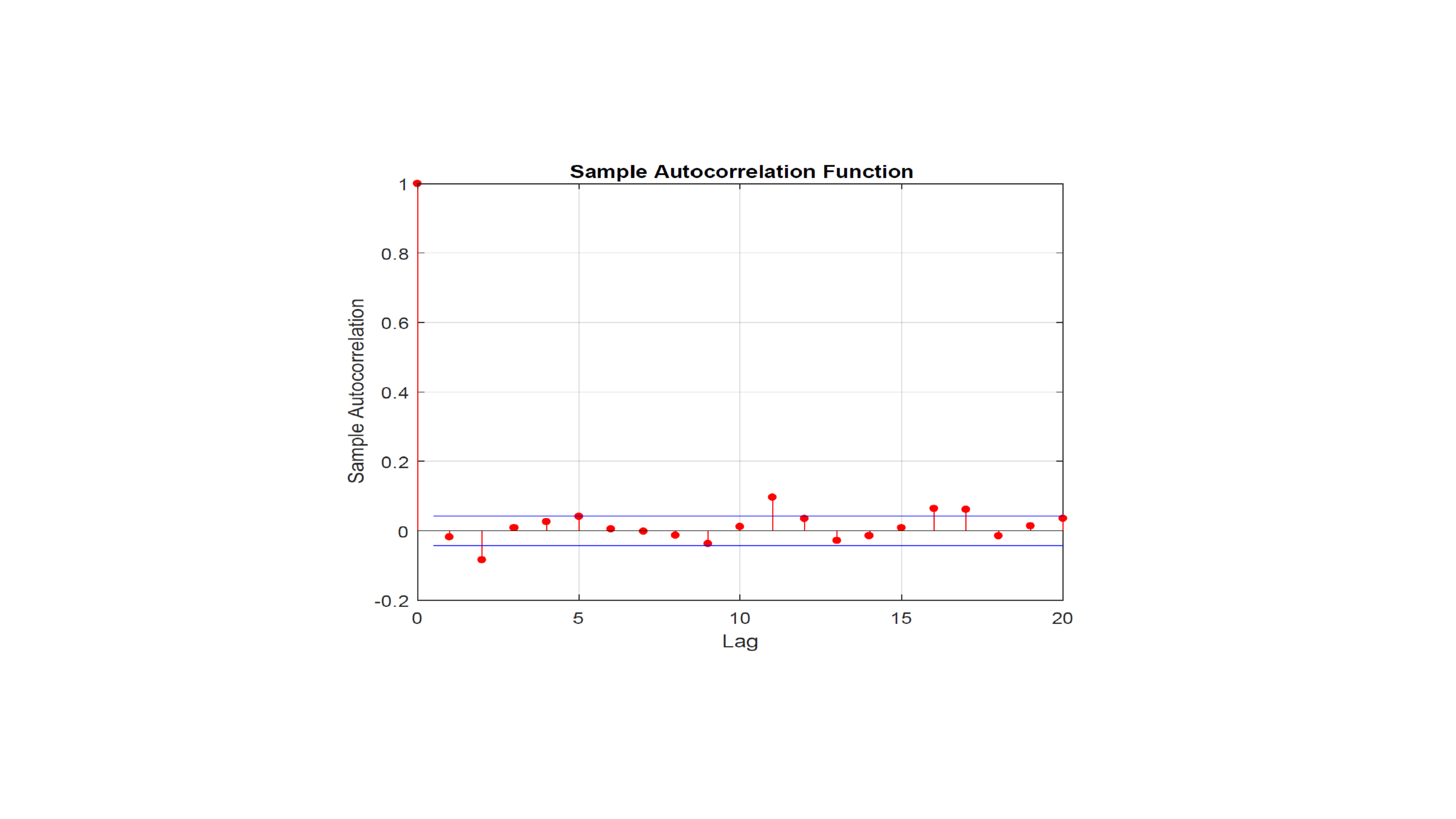}}}\hspace{5pt}
\subfigure[]{
\resizebox*{6cm}{!}{\includegraphics{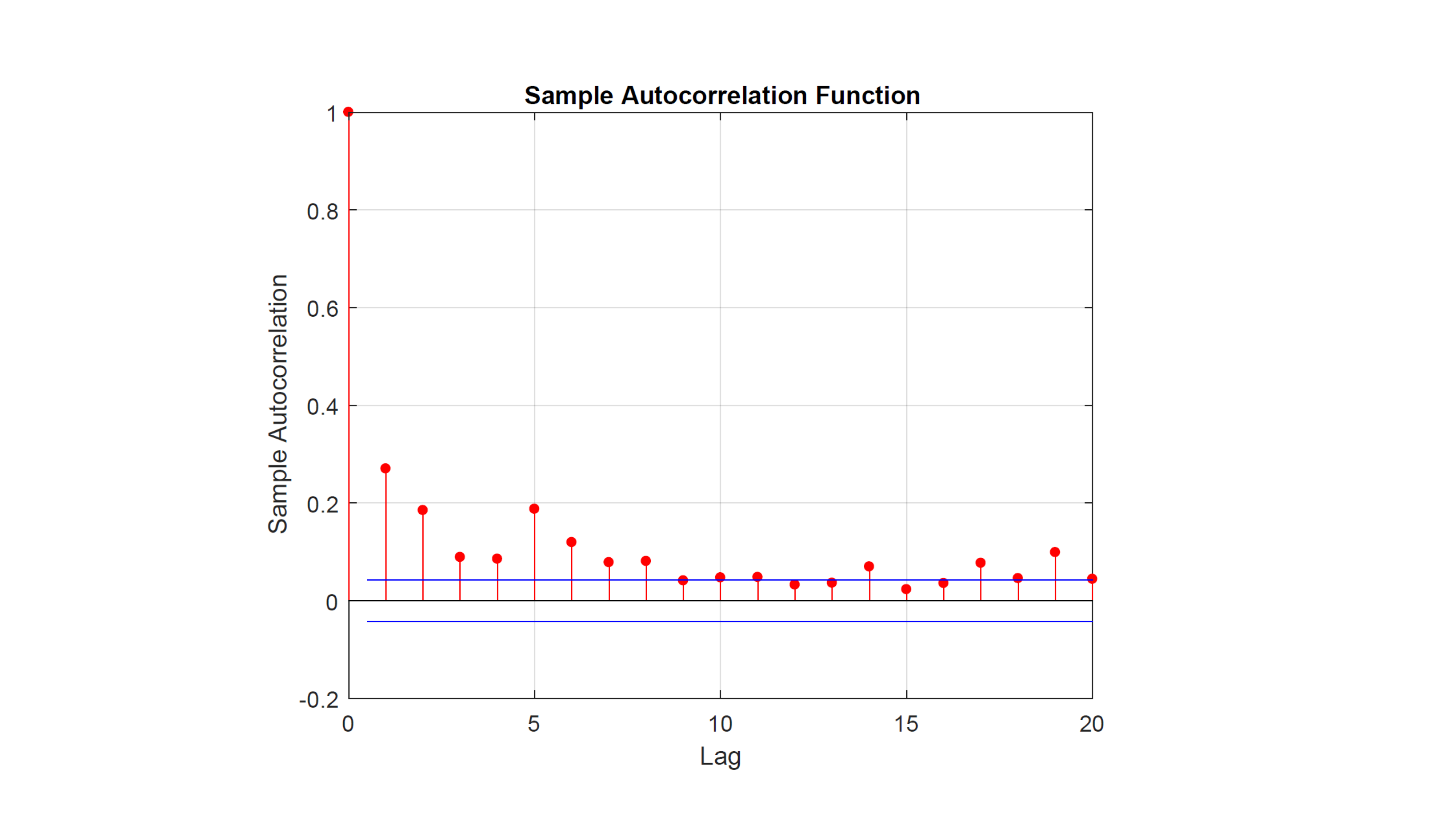}}}
\caption{Autocorrelation of log-returns and squared log-returns }
\label{fig:autocorr}
\end{center}
\end{figure}

In figure \ref{fig:pdfemp} the  graph in the left shows the empirical probability density function (p.d.f.) of log-return exchanges compared to a normal p.d.f. with the same mean and standard deviation. Again, it suggests a non-Gaussian distribution that allows to capture large oscillations and heavy tails present in the data. The  graph in the right shows a scaled and shifted t-student p.d.f., red line, and a stable p.d.f., blue line adjusted to the bitcoin data. Both probability distributions provide a better fit than the normal one. \\

 Parameters in the fitting of the log-returns p.d.f. are estimated using a maximum likelihood approach. Notice that in the case of the stable distribution the p.d.f. is not explicitly known. Numerical inversion of the characteristic function is required. See for example Mittnik and Rachev (2001).\\
The empirical p.d.f. of log-return exchanges is obtained using a non-parametric Gaussian kernel.

\begin{figure}
\begin{center}
\subfigure[]{
\resizebox*{6cm}{!}{\includegraphics{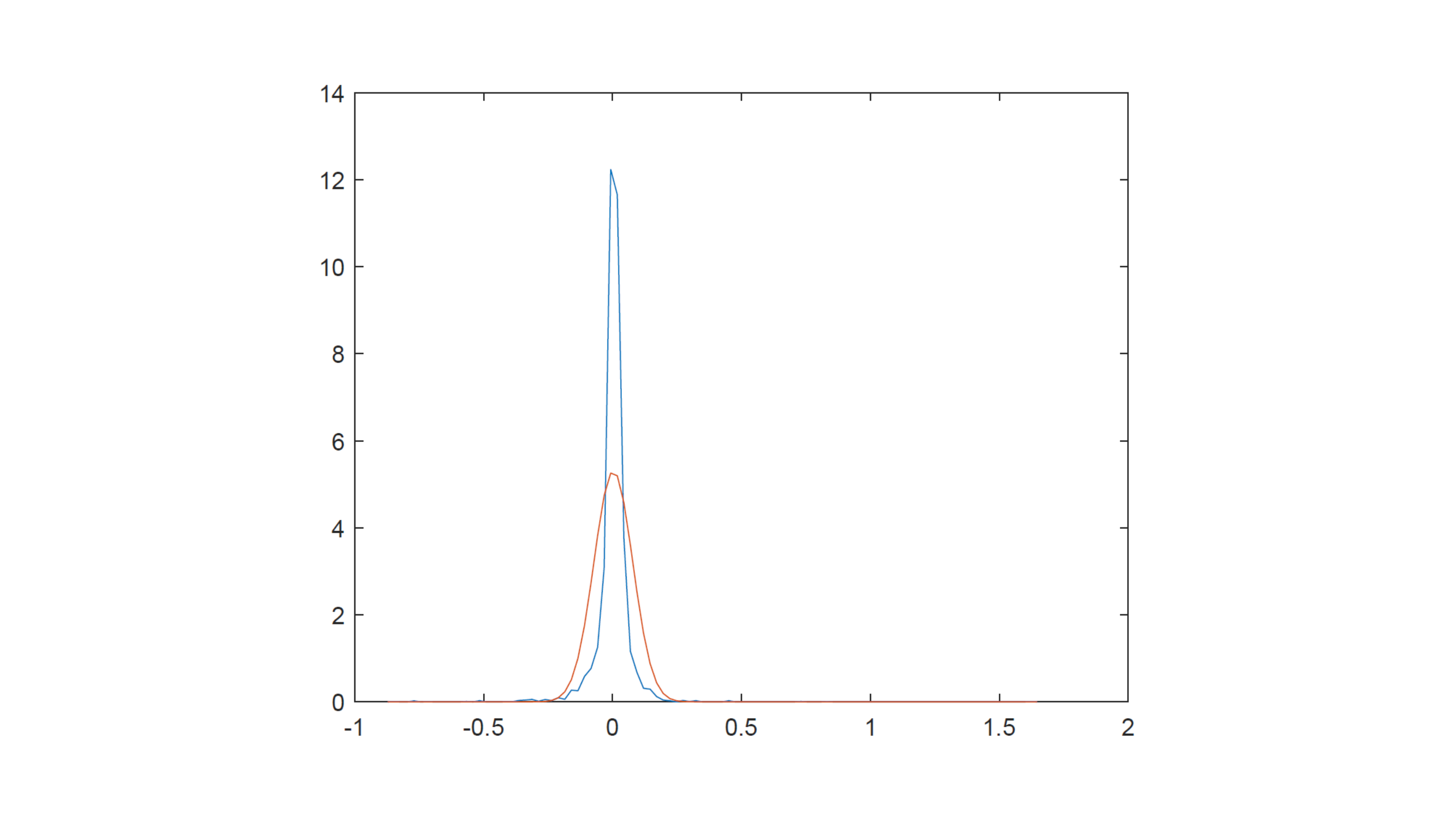}}}\hspace{5pt}
\subfigure[]{
\resizebox*{6cm}{!}{\includegraphics{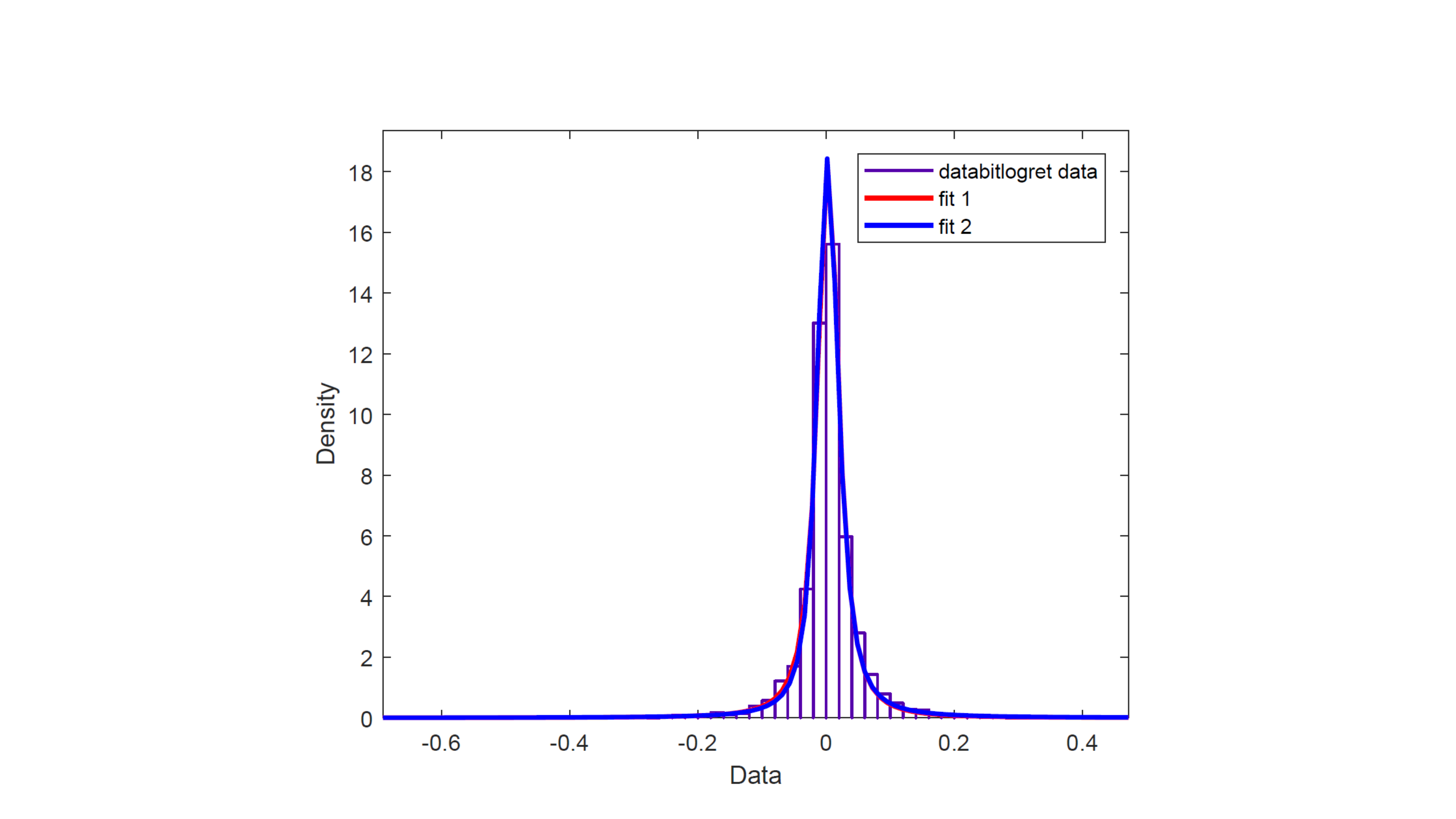}}}\hspace{5pt}
\caption{Left: empirical pdf of log-return exchange bitcoin-US dollar, compared with a normal pdf. Right: Empirical p.d.f. vs stable and t-student distributions   }
\label{fig:pdfemp}
\end{center}
\end{figure}

Parameter estimation results are shown in table \ref{tab:mlestudstab} for a t-student and a stable distribution. Between brackets the 95\% confidence interval of the estimation, accordingly to the Fisher information estimated from a maximum likelihood approach.  The values of parameters $\alpha$ in the stable distribution and the degrees of freedom  $\nu$  in a located and scaled normal distribution shows a strong heavy-tailed distribution of the exchanges.\\
In the case of a t-student distribution parameter $\alpha$ represents its number of degrees of freedom. For both, stable and t-student, a value of $\alpha$ that  low indicated an extreme high volatility and tail thickness.\\
The results above are confirmed by a fit based on a generalized Pareto distribution. In this case the parameter $\alpha$ means the shape of the distribution. A positive value $\alpha=$ indicates a heavy-tailed probability distribution. Data in excess of $0.05$ have been considered.
\small{
\begin{table}[hb!]
  \centering
  \begin{tabular}{|c|c|c|c|c|}
    \hline
   param.  & $\nu$ & $\sigma$ & $\mu$ & \\ \hline
      t-student  & 1.35307   &  0.0174    &  0.0056     &  \\ \hline
     conf. int.  &[1.23695, 1.4801]  & [0.01626, 0.01867] & [0.00467, 0.00653] & \\ \hline
   param.  & $\alpha$ & $\beta$ & $\sigma$ & $\mu$ \\ \hline
    stable &  1.13346     & 0.00306    &  0.0157503    & 0.00564    \\ \hline
   conf. int. & [1.08219, 1.18473] & [-0.07689, 0.08300] & [0.01491, 0.0166]& [0.00465, 0.0066] \\    \hline
  \end{tabular}

  \caption{Maximum likelihood estimates of parameters in a scaled t-student and stable   laws}\label{tab:mlestudstab}
\end{table}
}
 The parameters  considered are listed in table \ref{tab:parm}. They correspond respectively to a mean-reverting Black-Scholes, Merton and Kou models.

\begin{table}[htp!]
  \centering
  \begin{tabular}{|c|c|}
    \hline
    Model & Parameters \\     \hline
   BShMR  & $\mu, \alpha, \sigma$ \\     \hline
   MeMR  & $\mu, \alpha, \sigma, \mu_J,\sigma_J, \lambda$ \\     \hline
   KouMR  & $\mu, \alpha, \sigma, \eta_1, \eta_2, q$  \\   \hline
   \end{tabular}
  \caption{Parameters in different models}\label{tab:parm}
\end{table}

\subsection{Method of Moments}
We match empirical and theoretical  moments. Theoretical moments are obtained via the derivative of the characteristic function of the log-returns in equation (\ref{eq:chfjumpdiffzero}). Notice that we are estimating the parameters under the historic measure. Hence, we have:
\begin{eqnarray*}
 D^k K_1(0)  &=&  \varphi^{(k)}_{\xi}(0)\int_{j \Delta}^{(j+1) \Delta} e^{-k \alpha((j+1)\Delta-s)}  \;ds\\
 &=&  \varphi^{(k)}_{\xi}(0) \frac{(1-e^{-k \alpha \Delta} )}{k \alpha }=i^k E(\xi^k)\frac{(1-e^{-k \alpha \Delta} )}{k \alpha }\\
 D^k K_2(0)  &=& \varphi^{(k)}_{\xi}(0) (e^{-\alpha \Delta}-1)^k \int_0^{j \Delta}  e^{-k \alpha(j \Delta-s)}\;ds\\
 &=& (-1)^k i^k E(\xi^k)  (1-e^{-\alpha \Delta})^k  \frac{(1-e^{-k \alpha j \Delta} )}{k \alpha }
\end{eqnarray*}
On the other hand, after defining:
\begin{eqnarray*}
  T_1(u) &=&  -\frac{1}{2} \lambda (j+1)\Delta +i \mu e^{-\alpha j \Delta}(1-{\mathrm{e}}^{-\alpha \Delta})u\\   \nonumber
&-& \frac{\sigma ^2}{4 \alpha} E_{j,2}(\alpha)u^2 + \lambda (K_1(u)+K_2(u))
\end{eqnarray*}
we get:
\begin{eqnarray*}
 D T_1(0) &=& i \mu e^{-j \alpha \Delta}(1-e^{-\alpha \Delta})-i \lambda   \frac{E(\xi)}{ \alpha} E_{j,1}(\alpha) \\
 D^2 T_1(0) &=&- \frac{1}{2 \alpha} \left(\lambda E(\xi^2) +\sigma ^2 \right)E_{j,2}(\alpha)\\
  D^k T_1(0) &=&i^k \lambda   \frac{E(\xi^k)}{k \alpha} E_{j,k}(\alpha), k=3,4,\ldots
\end{eqnarray*}
The  derivatives of the characteristic function can be computed recursively by:
\begin{eqnarray*}
 D^k \varphi_{X_{j \Delta}}(u)  &=& \sum_{l=0}^{k-1} \binom{k-1}{l}D^{l+1} T_1(u)D^{k-l-1} \varphi_{X_{j \Delta}}(u)
\end{eqnarray*}
Details in the calculation of the first moments are presented in the appendix. \\
 Next, we define the  empirical moments with respect to the origin in a natural way and match to as many theoretical moments as needed. Hence:
\begin{eqnarray*}
  \hat{m}_{k} &=& \frac{1}{n} \sum_{j=1}^n X^k_{j \Delta}=\frac{1}{n}\sum_{j=1}^n E(X^k_{j \Delta}):=\mu_k \;, k \in \mathbb{N}
\end{eqnarray*}
It leads to the equations:
\begin{eqnarray*}
 \hat{m}_1&=& (\mu (1-e^{-\alpha  \Delta}) - \lambda  \frac{E(\xi)}{ \alpha}) \overline{e^{-\alpha j \Delta}}\\
 &=& (\mu- \lambda  \frac{E(\xi)}{ \alpha}) (1-e^{-\alpha  \Delta}) \overline{e^{-\alpha j \Delta}}\\
  \hat{m}_2 &=& -\frac{\overline{E_{j,2}(\alpha)}}{2 \alpha} \left(\lambda E(\xi^2) +\sigma ^2 \right) - \mu^2  (1-e^{-\alpha \Delta})^2 \overline{e^{-2 \alpha j \Delta}}\\
  &+& 2 \frac{\lambda \mu E(\xi)}{ \alpha} \overline{e^{- \alpha j \Delta}E_{j,1}(\alpha)}-\frac{\lambda^2  (E(\xi))^2}{ \alpha^2} \overline{E^2_{j,1}(\alpha)}
  \end{eqnarray*}

  \begin{eqnarray*}
 \hat{m}_3  &=&     \frac{ \lambda}{3 \alpha} E(\xi^3) \overline{E_{j,3}(\alpha)}+  \frac{3 \mu}{2 \alpha} \left(\lambda E(\xi^2) +\sigma ^2 \right) (1-e^{-\alpha \Delta}) \overline{e^{-j \alpha \Delta} E_{j,2}(\alpha)}\\
 &-& \frac{3 \lambda}{ 2 \alpha^2} E(\xi)\left(\lambda E(\xi^2) +\sigma ^2 \right) \overline{E_{j,2}(\alpha) E_{j,1}(\alpha)}+ \overline{ \left(   \mu e^{-j \alpha \Delta}(1-e^{-\alpha \Delta})-\lambda   \frac{E(\xi)}{ \alpha} E_{1,k}(\alpha) \right)^3}
 \\
  \hat{m}_4  &=&  \frac{\lambda}{4 \alpha} E(\xi^4) \overline{E_{j,4}(\alpha)} - \frac{4 \lambda \mu}{3 \alpha} (1-e^{-\alpha \Delta}) E(\xi^3)\overline{E_{j,3}(\alpha) e^{-j \alpha \Delta}}+ \frac{4 \lambda^2}{3 \alpha^2} E(\xi^3)E(\xi)\overline{E_{j,3}(\alpha)E_{j,1}(\alpha)}\\
  &+&\frac{3}{4 \alpha^2} \left(\lambda E(\xi^2) +\sigma ^2 \right)^2 \overline{E^2_{j,2}}\\
   &+&  \frac{3}{ \alpha}\left(\lambda E(\xi^2) +\sigma ^2 \right)\overline{\left(\mu  e^{-\alpha j \Delta}(1-e^{-\alpha  \Delta})-\frac{\lambda E(\xi)}{ \alpha} E_{j,1}(\alpha) \right)^2}\\
   &+& \overline{\left[ \mu e^{-j \alpha \Delta}(1-e^{-\alpha \Delta})- \lambda   \frac{E(\xi)}{ \alpha} E_{j,1}(\alpha) \right]^4}
 \end{eqnarray*}
where $\overline{f_j} = \frac{1}{n} \sum_{j=1}^n f_j$.\\
Hence:
\begin{eqnarray*}
 \overline{E_{j,1}(\alpha)} &=& (1-e^{-\alpha  \Delta})\overline{e^{-\alpha j \Delta}}\\
 \overline{E_{j,k}(\alpha)} &=& (1-e^{-k\alpha  \Delta})+ (1-e^{-\alpha  \Delta})^k (1-\overline{e^{-k \alpha j \Delta}})\\
 \overline{e^{-\alpha j \Delta} E_{j,k}(\alpha)}&=& (1-e^{-k\alpha  \Delta})\overline{e^{-\alpha j \Delta}}+ (1-e^{-\alpha  \Delta})^k \overline{e^{-(k+1) \alpha j \Delta}}\\
  \overline{E_{j,l}(\alpha) E_{j,k}(\alpha)} &=& (1-e^{-k\alpha  \Delta})(1-e^{-l\alpha  \Delta})+(-1)^l(1-e^{-\alpha  \Delta})^l (1-e^{-k\alpha  \Delta}) (1-\overline{e^{-l\alpha j \Delta}})\\
  &+& (-1)^k (1-e^{-\alpha  \Delta})^k (1-e^{-l\alpha  \Delta}) (1-\overline{e^{-k\alpha j \Delta}})\\
   &+& (-1)^{l+k} (1-e^{-\alpha  \Delta})^{l+k}(1-\overline{e^{-l\alpha j \Delta}}-\overline{e^{-k\alpha j \Delta}}+\overline{e^{-(l+k) \alpha j \Delta}})
\end{eqnarray*}
Higher moments can be computed in a similar way.
\begin{example}\textit{MRBSch}\\
In the case of the BSchMR model notice that $K_1=K_2=0$. The matching of the first three moments leads to the equations:
\begin{eqnarray*}
 \hat{m}_1&=& \mu (1-e^{-\alpha  \Delta})\overline{e^{-\alpha j \Delta}}\\
  \hat{m}_2 &=& -\frac{\sigma ^2}{2 \alpha}\overline{E_{j,2}(\alpha)} - \mu^2  (1-e^{-\alpha \Delta})^2 \overline{e^{-2 \alpha j \Delta}}\\
  \hat{m}_3  &=&    \frac{3 \mu}{2 \alpha} \sigma ^2  (1-e^{-\alpha \Delta}) \overline{e^{- \alpha j \Delta} E_{j,2}(\alpha)}
+     \mu^3(1-e^{-\alpha \Delta})^3  \overline{ e^{-3 \alpha j \Delta}}
 \end{eqnarray*}
\end{example}
\begin{example}\textit{MRMe}
\begin{eqnarray*}
 \hat{m}_1&=& (\mu - \lambda  \frac{\mu_J}{ \alpha})(1-e^{-\alpha  \Delta}) \overline{e^{-\alpha  \Delta}}\\
  \hat{m}_2 &=& -\frac{\overline{E_{j,2}(\alpha)}}{2 \alpha} \left(\lambda (\sigma^2_J+\mu^2_J) +\sigma ^2 \right) - \mu^2  (1-e^{-\alpha \Delta})^2 \overline{e^{-2 \alpha j \Delta}}\\
  &+& 2 \frac{\lambda \mu \mu_J}{ \alpha} \overline{e^{- \alpha j \Delta}E_{j,1}(\alpha)}-\frac{\lambda^2  (\mu_J)^2}{ \alpha^2} \overline{E^2_{j,1}(\alpha)}
  \end{eqnarray*}

  \begin{eqnarray*}
 \hat{m}_3  &=&     \frac{ \lambda}{3 \alpha} E(\xi^3) \overline{E_{j,3}(\alpha)}+  \frac{3 \mu}{2 \alpha} \left(\lambda \sigma^2_J+\mu^2_J +\sigma ^2 \right) (1-e^{-\alpha \Delta}) \overline{e^{-j \alpha \Delta} E_{j,2}(\alpha)}\\
 &-& \frac{3 \lambda}{ 2 \alpha^2} \mu_J\left(\lambda \sigma^2_J+\mu^2_J +\sigma ^2 \right) \overline{E_{j,2}(\alpha) E_{j,1}(\alpha)}+ \overline{ \left(   \mu e^{-j \alpha \Delta}(1-e^{-\alpha \Delta})-\lambda   \frac{\mu_J}{ \alpha} E_{1,k}(\alpha) \right)^3}
 \\
  \hat{m}_4  &=&  \frac{\lambda}{4 \alpha} E(\xi^4) \overline{E_{j,4}(\alpha)} - \frac{4 \lambda \mu}{3 \alpha} (1-e^{-\alpha \Delta}) E(\xi^3)\overline{E_{j,3}(\alpha) e^{-j \alpha \Delta}}+\frac{4 \lambda^2}{3 \alpha^2} E(\xi^3)\mu_J\overline{E_{j,3}(\alpha)E_{j,1}(\alpha)}\\
  &+&\frac{3}{4 \alpha^2} \left(\lambda \sigma^2_J+\mu^2_J +\sigma ^2 \right)^2 \overline{E^2_{j,2}}\\
   &+&  \frac{3}{ \alpha}\left(\lambda \sigma^2_J+\mu^2_J +\sigma ^2 \right)\overline{\left(\mu  e^{-\alpha j \Delta}(1-e^{-\alpha  \Delta})-\frac{\lambda \mu_J}{ \alpha} E_{j,1}(\alpha) \right)^2}\\
   &+& \overline{\left[ \mu e^{-j \alpha \Delta}(1-e^{-\alpha \Delta})- \lambda   \frac{\mu_J}{ \alpha} E_{j,1}(\alpha) \right]^4}
 \end{eqnarray*}
\end{example}
\begin{example}\textit{MRKou}\\
The moments of the jump sizes are:
\begin{equation*}
  E(\xi^k)= k!\left( \frac{q}{\eta^k_1}-\frac{1-q}{\eta^k_2}\right)
\end{equation*}
Hence:
\begin{eqnarray*}
 \hat{m}_1&=& \mu \overline{e^{-\alpha  \Delta}}(1-e^{-\alpha  \Delta})- \lambda  \frac{1}{ \alpha}\left( \frac{q}{\eta^k_1}-\frac{1-q}{\eta^k_2}\right) \overline{E_{j,1}(\alpha)}\\
 &=& \left( \mu- \lambda  \frac{1}{ \alpha}\left( \frac{q}{\eta^k_1}-\frac{1-q}{\eta^k_2}\right) \right) (1-e^{-\alpha  \Delta}) \overline{e^{-\alpha  \Delta}}\\
  \hat{m}_2 &=& -\frac{\overline{E_{j,2}(\alpha)}}{2 \alpha} \left(2 \lambda \left( \frac{q}{\eta^2_1}-\frac{1-q}{\eta^2_2}\right) +\sigma ^2 \right) - \mu^2  (1-e^{-\alpha \Delta})^2 \overline{e^{-2 \alpha j \Delta}}\\
  &+& 2 \frac{\lambda \mu }{ \alpha}\left( \frac{q}{\eta_1}-\frac{1-q}{\eta_2}\right)\left(1-e^{-\alpha  \Delta} \right) \overline{e^{- 2 \alpha j }}-\frac{\lambda^2  \left( \frac{q}{\eta^k_1}-\frac{1-q}{\eta^k_2}\right)^2}{ \alpha^2} \overline{E^2_{j,1}(\alpha)}
  \end{eqnarray*}

  \begin{eqnarray*}
 \hat{m}_3  &=&     \frac{ 6 \lambda}{3 \alpha} \left( \frac{q}{\eta^3_1}-\frac{1-q}{\eta^3_2}\right) \overline{E_{j,3}(\alpha)}+  \frac{3 \mu}{2 \alpha} \left(2 \lambda \left( \frac{q}{\eta^2_1}-\frac{1-q}{\eta^2_2}\right) +\sigma ^2 \right) (1-e^{-\alpha \Delta}) \overline{e^{-j \alpha \Delta} E_{j,2}(\alpha)}\\
 &-& \frac{3 \lambda}{ 2 \alpha^2} \left( \frac{q}{\eta^k_1}-\frac{1-q}{\eta^k_2}\right) \left(2 \lambda \left( \frac{q}{\eta^2_1}-\frac{1-q}{\eta^2_2}\right) +\sigma ^2 \right) \overline{E_{j,2}(\alpha) E_{j,1}(\alpha)}\\
 &+&  \overline{ \left(   \mu e^{-j \alpha \Delta}(1-e^{-\alpha \Delta})-\lambda   \frac{1}{ \alpha} \left( \frac{q}{\eta_1}-\frac{1-q}{\eta_2}\right)E_{j,1}(\alpha) \right)^3}
 \\
  \hat{m}_4  &=&  \frac{6 \lambda}{ \alpha} \left( \frac{q}{\eta^4_1}-\frac{1-q}{\eta^4_2}\right) \overline{E_{j,4}(\alpha)} - \frac{4 \lambda \mu}{3 \alpha} (1-e^{-\alpha \Delta}) E(\xi^3)\overline{E_{j,3}(\alpha) e^{-j \alpha \Delta}}\\
  &+& \frac{12 \lambda^2}{ \alpha^2} \left( \frac{q}{\eta^3_1}-\frac{1-q}{\eta^3_2}\right)\left( \frac{q}{\eta_1}-\frac{1-q}{\eta_2}\right)\overline{E_{j,3}(\alpha)E_{j,1}(\alpha)}\\
  &+&\frac{3}{4 \alpha^2} \left(\lambda \left( \frac{q}{\eta^2_1}-\frac{1-q}{\eta^2_2}\right) +\sigma ^2 \right)^2 \overline{E^2_{j,2}}\\
   &+&  \frac{3}{ \alpha}\left(\lambda \left( \frac{q}{\eta^2_1}-\frac{1-q}{\eta^2_2}\right) +\sigma ^2 \right)\overline{\left(\mu  e^{-\alpha j \Delta}(1-e^{-\alpha  \Delta})-\frac{\lambda E(\xi)}{ \alpha} E_{j,1}(\alpha) \right)^2}\\
   &+& \overline{\left[ \mu e^{-j \alpha \Delta}(1-e^{-\alpha \Delta})- \lambda   \frac{1}{ \alpha}\left( \frac{q}{\eta_1}-\frac{1-q}{\eta_2}\right) E_{j,1}(\alpha) \right]^4}
 \end{eqnarray*}
\end{example}
\subsection{Estimation by Maximum Likelihood}
First, we find the p.d.f. of the random variables $X_{\Delta j}$. To this end we define the quantities $\gamma_t=\int_0^t e^{\alpha s} \;dB_s$ and $\nu_t=\int_0^t e^{\alpha s} \;dZ_s$.\\
From the jump-diffusion model given by equation (\ref{eq:bsjmultid3}) we can re-write equation (\ref{eq:mrevlogret}) as $ X_{j \Delta}= \beta_j+\eta_j$,  where the independent random variables  $\beta_j$ and $\eta_j$ are defined as:
\begin{eqnarray*}\nonumber
 \beta_j&=&  \mu (1-e^{-\alpha  \Delta})e^{-\alpha j \Delta} + \sigma e^{-\alpha (j+1) \Delta}(\gamma_{(j+1) \Delta}-\gamma_{j \Delta})+ \sigma e^{-\alpha j \Delta}(e^{-\alpha \Delta}-1 )\gamma_{j \Delta}\\
 \eta_j &=&  e^{-\alpha j \Delta} \left[ e^{-\alpha  \Delta} (\nu_{(j+1) \Delta}-\nu_{j \Delta})+ (e^{-\alpha  \Delta}-1)\nu_{j \Delta} \right]
 \end{eqnarray*}
   From equation (\ref{eq:applevyk}) we have, that the characteristic functions for  $\gamma_{j \Delta}$ and $\gamma_{(j+1) \Delta}-\gamma_{j \Delta}$ are respectively:
 \begin{eqnarray*}
   \varphi_{\gamma_{j \Delta}}(u) &=& exp( \int_0^{j \Delta} l_B(i u e^{\alpha s}) \;ds)=exp(-\frac{1}{4 \alpha}(e^{2 \alpha  }-1)u^2)\\
    \varphi_{\gamma_{(j+1) \Delta}-\gamma_{j \Delta}}(u) &=&exp(-\frac{1}{4 \alpha}e^{2 \alpha j \Delta}(e^{2 \alpha  }-1)u^2)
    \end{eqnarray*}
 Therefore,  we conclude that:
    \begin{equation*}
    \beta_j \sim  N \left(\mu_{j,\beta}(\alpha), \sigma^2_{j,\beta} \right)
\end{equation*}
where $\mu_{j,\beta}(\alpha)=\mu (1-e^{-\alpha  \Delta})e^{-\alpha j \Delta}$ and $\sigma^2_{j,\beta}(\alpha)=\frac{\sigma^2}{4 \alpha} E_{j,2}(\alpha)$.\\
 On the other hand, from equation (\ref{eq:chfjumpdiff2}) the characteristic functions of $\nu_{j \Delta}$ and $\nu_{(j+1) \Delta}-\nu_{j \Delta}$ are respectively:
\begin{eqnarray*}\nonumber
     \varphi_{\nu_{j \Delta}}(u)&=& exp \left(  \int_0^{j \Delta} l_Z(i u e^{\alpha s}) \;ds \right)=exp \left( \lambda \int_0^{j \Delta} \varphi_{\xi}( u e^{\alpha s}) \;ds -\lambda j \Delta \right) \\ \nonumber
    \varphi_{\nu_{(j+1) \Delta}-\nu_{j \Delta}}(u)&=& exp \left( \lambda \int_{j \Delta}^{(j+1) \Delta} \varphi_{\xi}( u e^{\alpha s}) \;ds -\lambda  \Delta \right)
 \end{eqnarray*}
 Hence:
 \begin{eqnarray*}
    \varphi_{\eta_{j \Delta}}(u)  &=& \varphi_{\nu_{j \Delta}} \left((e^{-\alpha \Delta}-1 )e^{-\alpha j \Delta}u \right) \varphi_{\nu_{(j+1) \Delta}-\nu_{j \Delta}}(e^{-\alpha (j+1) \Delta}u)\\
    &=& exp \left( \lambda \int_0^{j \Delta} \varphi_{\xi}(e^{-\alpha (j+1) \Delta} u) \;ds -\lambda (j+1) \Delta \right)\\
    &&exp \left( \lambda \int_{j \Delta}^{(j+1) \Delta} \varphi_{\xi} \left((e^{-\alpha \Delta}-1 )e^{-\alpha j \Delta}u \right) \;ds  \right)\\
    &=& exp \left( \lambda(K_1(u)+K_2(u)-(j+1) \Delta)\right)
 \end{eqnarray*}
Notice that the probability distributions of $\nu_{j \Delta}$ and $\nu_{(j+1) \Delta}-\nu_{j \Delta}$ have positive mass probability at zero. We write their p.d.f.'s  as the Radon-Nikodym derivative with respect to a measure with positive mass at zero and diffuse everywhere else.\\
We denote by $f_{\beta_j}(x; \theta)$, $f_{\eta_j}(x; \theta)$ and $f_{X_{j \Delta}}(x; \theta)$ respectively the p.d.f.'s functions  of $\beta_j$, $\eta_j$ and $X_{j \Delta}$. In order to emphasize the dependence, we let them depend on  of the  unknown parameter $\theta$, which should not be confused with the Gerber-Shiu parameter in section one. We let other relevant quantities depend on $\theta$ as well.\\
 Furthermore, we assume the condition:
 \begin{eqnarray}\label{eq:condint}
    \int_{\mathbb{R}} exp(\lambda Re(K_1(u;\theta)+K_2(u;\theta)))\,du &<& +\infty
 \end{eqnarray}
 in order to guarantee the existence of the p.d.f. of $\eta_j$ and the log-return variables $X_{j \Delta}$.
 The  p.d.f. of the random variable $\eta_j$  is computed via inverse Fourier transform as:
 \begin{eqnarray}\nonumber
  f_{\eta_j}(x; \theta)  &=& \frac{1}{2 \pi} e^{-\lambda (j+1) \Delta}(1-exp(-\lambda(j+1) \Delta)) \int_{\mathbb{R}} exp(-iux+ \lambda (K_1(u; \theta)+K_2(u; \theta)))\,du \\ \nonumber
  &+& exp(-\lambda(j+1) \Delta) \\ \label{eq:pdfjump}
    &&
 \end{eqnarray}
  Notice that, by the independence between $\beta_j$ and $\eta_j$ we have $f_{X_{j \Delta}}= f_{\beta_j}\star f_{\eta_j} $, where $f \star g$ is the convolution product of functions $f$ and $g$.\\
Hence:
 \begin{eqnarray*}
    f_{X_{j \Delta}}(x ; \theta) &=&  \int_{\mathbb{R}}  f_{\beta_j}(x-y ; \theta) f_{\eta_j} (y ; \theta)\;dy\\
    &=&  \int_{\mathbb{R}}  f_{\eta_j} (x-\sigma_{j,\beta}(\alpha)z-\mu_{j,\beta}(\alpha))f_Z(z) \;dz
    \end{eqnarray*}
    after the change of variables $z=\frac{x- y-\mu_{j,\beta}(\alpha)}{\sigma_{j,\beta}(\alpha)}$. Then, we substitute equation (\ref{eq:pdfjump}) into the last equation above to get:
    \begin{eqnarray}\nonumber
    f_{X_{j \Delta}}(x;\theta)  &=&  \frac{1}{2 \pi} e^{-\lambda (j+1) \Delta}(1-exp(-\lambda(j+1) \Delta)\\ \nonumber
     && \int_{\mathbb{R}} \int_{\mathbb{R}} exp \left(-iu(x-\sigma_{j,\beta}(\alpha)z-\mu_{j,\beta}(\alpha))+ \lambda (K_1(u; \theta)+K_2(u; \theta))\right)\,du \, f_Z(z)\,dz \\ \nonumber
  &+& exp(-\lambda(j+1) \Delta)\\ \nonumber
  &=&  \frac{1}{2 \pi} e^{-\lambda (j+1) \Delta}(1-exp(-\lambda(j+1) \Delta)) J(x;\theta)+ exp(-\lambda(j+1) \Delta)\\ \label{eq:loglikeeta}
  &&
    \end{eqnarray}
    after applying Fubini, where:
    \begin{eqnarray*}
       J(x;\theta) &=& \int_{\mathbb{R}}  A_4(x,u;\theta) \,du\\
      A_4(x,u;\theta) &=&  \exp \left[-iu(x-\mu_{j,\beta}(\alpha))+ \lambda (K_1(u; \theta)+K_2(u; \theta))\right] \exp \left(-\frac{1}{2} \sigma^2_{j,\beta}(\alpha) u^2 \right)
    \end{eqnarray*}
  We denote the vector of data by $x_{\Delta}=(x_{\Delta}, x_{2\Delta}, \ldots, x_{ n \Delta})$. The log-likelihood function, disregarding terms non depending on the parameters and the last term, is:
 \begin{eqnarray}\nonumber
   l(x_{\Delta} ; \theta)&=& \sum_{j=1}^ n  log f_{X_{j \Delta}}(x_{j \Delta}; \theta)\\ \nonumber
   &=&- \lambda \Delta \left(\frac{n(n+3)}{2}\right)+ \sum_{j=1}^ n  \log  (1-\exp(-\lambda(j+1) \Delta)+ \sum_{j=1}^ n \log J(x_{j \Delta},\theta)\\ \label{eq:loglikejd}
   &&
 \end{eqnarray}
 The maximum likelihood estimator $\hat{\theta}_{MLE}:=arg min_{\theta}  l(x_{\Delta};\theta)$ solves the system:
 \begin{eqnarray*}
   D_k l(x_{\Delta} ; \theta) &=&  0,\;\; k=1,2,\ldots,d
 \end{eqnarray*}
 where $D_k$ here is the derivative with respect to the parameter $\theta_k$. The dimension $d$ depends on the specific model.\\
 Hence:
 \begin{eqnarray*}
   D_k l(x_{\Delta} ; \theta) &=&  \left \{\begin{array}{cc}
                                           -  \Delta \left(\frac{n(n+3)}{2}\right)+\Delta \sum_{j=1}^n  \frac{(j+1) )exp(-\lambda(j+1) \Delta)}{1-exp(-\lambda(j+1) \Delta)}+  \sum_{j=1}^n  \frac{D_k J(x_{j \Delta},\theta)}{J(x_{j \Delta},\theta)}= 0    &, \theta_k=\lambda  \\
                                         \sum_{j=1}^n  \frac{D_k J(x_{j \Delta},\theta)}{J(x_{j \Delta},\theta)}= 0 &, \theta_k \neq \lambda
                                           \end{array}
    \right.
 \end{eqnarray*}
   \begin{example}\textit{Mean-reverting Black-Scholes model}\\
In this case $\lambda=\xi_j=0$, $\theta=(\alpha,\mu,\sigma^2)$. Hence:
\begin{eqnarray*}
   l(x_{\Delta} ; \theta)&=& \frac{n}{2} \log \sigma^2 -\frac{n}{2} \log \alpha +\frac{1}{2}\sum_{j=1}^ n  \log E_{j,2}(\alpha)- \frac{2 \alpha}{\sigma^2}\sum_{j=1}^ n  \frac{(x_{j\Delta}-\mu_{j,\beta}(\alpha))^2}{E_{j,2}(\alpha)}
 \end{eqnarray*}
  Differentiating with respect to  $\alpha$:
 \begin{eqnarray*}
    \frac{\partial E_{j,2}(\alpha)}{\partial \alpha}&=&  2 \Delta e^{-2 \alpha \Delta}+\Delta (1-e^{- \alpha \Delta})(1-e^{- 2\alpha j \Delta})e^{- \alpha \Delta}+ 2 j \Delta (1-e^{- \alpha \Delta})^2 e^{- 2\alpha j \Delta}\\
    &=& \Delta \left[ e^{- 2 \alpha \Delta}-(4j+1)e^{-  \alpha (2j+1)\Delta}+(2j+1)e^{-  2 \alpha (j+1)\Delta}+2j e^{- 2 \alpha j \Delta}+e^{- \alpha \Delta}\right]
 \end{eqnarray*}
  we have the following system of equations:
 \begin{eqnarray*}
  \sigma^2 \frac{\partial l}{\partial \sigma^2}  &=& \frac{n}{2 }+ n \log \alpha +\frac{2 \alpha}{\sigma^2} \sum_{j=1}^ n \frac{(x_{j\Delta}-\mu (1-e^{-\alpha  \Delta})e^{-\alpha j \Delta})^2}{E_{j,2}(\alpha)}=0\\
 - \frac{\sigma^2}{4 \alpha (1-e^{-\alpha \Delta})}  \frac{\partial l}{\partial \mu}  &=& \sum_{j=1}^ n \frac{(x_{j\Delta}-\mu (1-e^{-\alpha  \Delta})e^{-\alpha j \Delta}) e^{-\alpha j \Delta}}{E_{j,2}(\alpha)}=0 \\
   &=& \frac{4 \alpha}{\sigma^2}(1-e^{-\alpha \Delta}) \left[\sum_{j=1}^ n \frac{x_{j\Delta}e^{-\alpha j \Delta}}{E_{j,2}(\alpha)}-
   \mu (1-e^{-\alpha \Delta})\sum_{j=1}^ n  \frac{e^{-\alpha j \Delta}}{E_{j,2}(\alpha)} \right]=0\\
   &=& \frac{4 \alpha n}{\sigma^2}(1-e^{-\alpha \Delta}) \left[ \overline{ \left(\frac{x_{j\Delta}e^{-\alpha j \Delta}}{E_{j,2}(\alpha)} \right)}-
    \mu (1-e^{-\alpha \Delta}) \overline{\left(\frac{e^{-\alpha j \Delta}}{E_{j,2}(\alpha)}\right)} \right]=0\\
  \frac{\sigma^2}{n}  \frac{\partial l}{\partial \alpha}  &=& -\frac{\sigma^2}{2 \alpha}+\frac{\sigma^2}{2}\overline{\left(\frac{\partial E_{j,2}(\alpha)}{\partial \alpha}\right)}
    - 2 \overline{\left( \frac{(x_{j\Delta}-\mu (1-e^{-\alpha  \Delta})e^{-\alpha j \Delta})^2}{E_{j,2}(\alpha)}\right)}\\
      &+& 4  \alpha \mu \Delta \left[\overline{\left(\frac{(j+1) x_{j\Delta} e^{-\alpha (j+1)  \Delta} }{E_{j,2}(\alpha)}\right)}-\overline{\left(\frac{j  e^{-\alpha j \Delta}}{E_{j,2}(\alpha)}\right)}\right]\\
   &-& 4  \alpha \mu^2 \Delta(1-e^{-\alpha  \Delta})\left[\overline{\left( \frac{ (j+1)  e^{-\alpha (2j+1)  \Delta}}{E_{j,2}(\alpha)}\right)}- \overline{\left( \frac{ j e^{-2 \alpha j \Delta}}{E_{j,2}(\alpha)}\right)} \right]\\
   &+& 4  \alpha \overline{\left(  \frac{\frac{\partial E_{j,2}(\alpha)}{\partial \alpha}(x_{j\Delta}-\mu (1-e^{-\alpha \Delta})e^{-\alpha j \Delta})^2}{E^2_{j,2}(\alpha)}\right)}=0
      \end{eqnarray*}
      Details are left to the appendix.
   \end{example}
 \begin{example}\textit{MRMe}\\
 The parameter vector is $\theta=(\alpha,\mu,\sigma^2, \lambda, \mu_J, \sigma^2_J)$ with:
 \begin{eqnarray*}
  K_1(u;\theta) &=&  \int_{j \Delta}^{(j+1) \Delta} \varphi_{\xi}(u e^{-\alpha((j+1)\Delta-s)})\;ds= \frac{1}{\alpha}\left[ A_1(u,0,(j+1) \Delta)-A_1(u,0,j \Delta)\right] \\
   K_2(u;\theta) &=& \int_0^{j \Delta}  \varphi_{\xi}(u (e^{-\alpha \Delta}-1)e^{-\alpha(j \Delta-s)})\;ds=\frac{1}{\alpha} A_1(u(1-e^{-\alpha \Delta}),0,j \Delta)\\
     J(x_{j \Delta},\theta)  &=&  \int_{\mathbb{R}} \exp \left(-iu(x-\mu_{j,\beta}(\alpha))+ \lambda (K_1(u; \theta)+K_2(u; \theta))\right) \exp \left(-\frac{1}{2} \sigma^2_{j,\beta}(\alpha) u^2 \right) \,du
   \end{eqnarray*}

 Hence equation (\ref{eq:loglikejd}) becomes:
 \begin{eqnarray*}
   l(x_{\Delta} ; \theta)&=& \sum_{j=1}^ n  log f_{X_{j \Delta}}(x_{j \Delta}; \theta)\\
   &=&- \lambda \Delta \left(\frac{n(n+3)}{2}\right)+ \sum_{j=1}^ n  \log  (1-\exp(-\lambda(j+1) \Delta)+ \sum_{j=1}^ n \log J(x_{j \Delta},\theta)\\
 \end{eqnarray*}

 Condition (\ref{eq:condint}) allows the interchange of derivative and integral leading to:
 \begin{eqnarray*}
   D_k J(x_{j \Delta},\theta)  &=&  \int_{\mathbb{R}} D_k \exp \left(-iu(x-\mu_{j,\beta}(\alpha))+ \lambda (K_1(u; \theta) \right. \\
   &+& \left. K_2(u; \theta))\right) \exp \left(-\frac{1}{2} \sigma^2_{j,\beta}(\alpha) u^2 \right) \,du
 \end{eqnarray*}
 \end{example}
\subsection{Generalized Estimation using Empirical Characteristic Function}
Empirical characteristic methods have been studied in several papers, see Yu(2003) and references within for an account of this approach.\\
The log-return US-bitcoin exchange rates assuming a mean-reverting Levy process enters within the framework of i.i.d., although non-stationary,  data.\\
We define the empirical characteristic function (ECF) as:
\begin{equation*}
  \hat{\varphi}_{X_{j\Delta}}(u)=\frac{1}{n} \sum_{j=1}^n exp(i u  X_{j \Delta})
\end{equation*}
The characteristic function is re-written $\varphi_{X_{j \Delta}}(u):=\varphi_{X_{j \Delta}}(u; \theta)$ to emphasize the dependence on the unknown parameter. \\
This parameter includes mean-reverting, diffusion and jump parameters, namely $\theta=(\alpha, \mu, \sigma, \lambda, \theta_{\xi}) \in \mathbb{R}^d$, where $\theta_X$ are the parameters related to the probability distribution of the jumps.\\
The estimation functions $f( X_{j \Delta}; \theta): \mathbb{R} \times \mathbb{R}^d \rightarrow \mathbb{R}^l$ and the estimating equations are defined as:
\begin{eqnarray*}
  h( u, X_{j \Delta}; \theta) &=&  exp(i u  X_{j \Delta})-\varphi_{X_{j \Delta}}(u; \theta)\\
  f( X_{j \Delta}; \theta) &=& \left( Re h( u_1, X_{j \Delta}; \theta), \ldots Re h( u_L, X_{j \Delta}; \theta) \right.,\\
  && \left. Im  h( u_1, X_{j \Delta}; \theta), \ldots,  Im h( u_L, X_{j \Delta}; \theta) \right)\\
  \frac{1}{n} \sum_{j=1}^n  f( X_{j \Delta}; \theta)&=& 0
\end{eqnarray*}
where $u_k=-\eta+\delta k,\, k=1,2,\ldots,L$ is an equally spaced grid of length $\delta=\frac{2 \eta}{L}$ on the interval $[-\eta,\eta]$, where the estimating functions are evaluated. See Feuerverger and McDunnough(1981) for a discussion about the optimal choice of points $u_k$. \\
The GMM estimator $\hat{\theta}_{GMM}$ is obtained as:
\begin{equation*}
  \hat{\theta}_{GMM}=argmin_{\theta} \frac{1}{n} \sum_{j=1}^n  f( X_{j \Delta}; \theta) \hat{\Omega} \frac{1}{n} \sum_{j=1}^n  f( X_{j \Delta}; \theta)'
\end{equation*}
where $\hat{\Omega}$ is a consistent estimator of the matrix:
\begin{equation*}
  \Omega=\left(
           \begin{array}{cc}
             \Omega_{RR} & \Omega_{RI} \\
             \Omega_{IR} & \Omega_{II} \\
           \end{array}
         \right)
\end{equation*}
with components:
\begin{eqnarray*}
 ( \Omega_{RR})_{jk}  &=& \frac{1}{2}(Re(\varphi_{X_{j \Delta}}(u_j+u_k; \theta))+Re(\varphi_{X_{j \Delta}}(u_j-u_k; \theta)))-Re(\varphi_{X_{j \Delta}}(u_j; \theta))Re(\varphi_{X_{j \Delta}}(u_k; \theta))\\
 ( \Omega_{RI})_{jk}  &=& \frac{1}{2}(Im(\varphi_{X_{j \Delta}}(u_j+u_k; \theta))+Im(\varphi_{X_{j \Delta}}(u_j-u_k; \theta)))-Im(\varphi_{X_{j \Delta}}(u_j; \theta))Re(\varphi_{X_{j \Delta}}(u_k; \theta))\\
 ( \Omega_{II})_{jk}  &=& \frac{1}{2}(Re(\varphi_{X_{j \Delta}}(u_j+u_k; \theta))+Re(\varphi_{X_{j \Delta}}(u_j-u_k; \theta)))-Im(\varphi_{X_{j \Delta}}(u_j; \theta))Im(\varphi_{X_{j \Delta}}(u_k; \theta))\\
 \Omega_{RI} &=& \Omega_{IR}
 \end{eqnarray*}
 A continuum choice of $u$, see Carrasco and Florens(2002) leads to the estimator $\theta_C$ verifying:
 \begin{equation*}
   \theta_C= arg min_{\theta} ||\hat{\varphi_{\Delta}}- \varphi_{X_{j \Delta}}(u; \theta)||_{W}
 \end{equation*}
 where $|| f||_W=\int_{\mathbb{R}} |f(u)|^2\, exp(-u)\,du$
 \begin{eqnarray}\nonumber
   \varphi_{X_{j \Delta}}(u)  &=&  exp \left[ \lambda  (K_1(u; \theta)+K_2(u; \theta)) \right.\\ \nonumber
  &-&\left. \lambda (j+1)\Delta +i \mu e^{-\alpha j \Delta}(1-{\mathrm{e}}^{-\alpha \Delta})u-\frac{\sigma ^2}{4 \alpha} E_{j,2}(\alpha)u^2 \right] \\ \label{eq:chfjumpdiffzero}
    &&
 \end{eqnarray}
 \section{Pricing  bitcoin options}
We study the  pricing of  a European call option. Its payoff is given by:
\begin{eqnarray*}
  h_1 &=& (S_T-K)_+:=max(S_T-K,0)
\end{eqnarray*}
To apply a FFT  method we redefine the payoff in terms of the log-returns instead of the  price of the exchange. Hence, we write:
\begin{equation}\label{payoffreturn}
  H(y)=(e^{y}-K)_+
\end{equation}
 We give the following basic result in terms of the pricing of a European contract by FFT inversion.  It is adapted from Car and Madan(1999) to these specific models. We introduce a damping factor $R$ for stability. See Raible(2001) for the latter.
\begin{proposition}\label{pricesatheo}
Consider a dynamic driven by equations (\ref{eq:priceslog}),(\ref{eq:bsjmultid4})and (\ref{eq:bsjmultid3}) under an EMM $\mathcal{Q}^{\theta}$ obtained by an Esscher transformation,  and a European call option with strike price $K$ and maturity $T$.\\
 Assume there exists  a real value $R>1$ such that $E_{\mathcal{Q}} [e^{R V_t}] < +\infty$.\\
Then,  the price of the contract is given by:
 \begin{eqnarray}  \label{eq:pricesa}
  C&:=& C(x_0)= \frac{1}{2 \pi} e^{R x_0-rT} \int_{\mathbb{R}}  e^{-i x_0 x}  \varphi^{\theta}_{Y_T}(-iR-x) \hat{H}_R(x)\;dx
    \end{eqnarray}
 where $x_0=\log S_0, k=\log(K)$ and
 \begin{eqnarray*}
    \hat{H}_R(x)&=&e^{(ix-R)(k-x_0)}K \left( \frac{1}{ix-R}-\frac{e^{-x_0}}{ix-R+1} \right)
 \end{eqnarray*}
  \end{proposition}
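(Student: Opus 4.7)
The plan is to adapt the Carr--Madan (1999) Fourier pricing scheme to this mean-reverting jump-diffusion setting under the Esscher-tilted measure $\mathcal{Q}^{\theta}$, using the characteristic function $\varphi^{\theta}_{Y_T}$ already supplied by Proposition~1. The approach has four steps in order: introduce a damping factor $R>1$ so that the payoff becomes Fourier-transformable, write the price as the expectation of a Fourier-inverted damped payoff, interchange expectation and Fourier integral via Fubini, and finally compute $\hat{H}_R$ in closed form.

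First, I would start from the risk-neutral formula $C(x_0)=e^{-rT}E_{\mathcal{Q}^{\theta}}[(S_T-K)_+]$ and use $S_T=e^{x_0}\exp(Y_T)$ together with $H$ as in (\ref{payoffreturn}) to rewrite $C(x_0)=e^{-rT}E_{\mathcal{Q}^{\theta}}[H(x_0+Y_T)]$. Since $H(y)\sim e^{y}$ at $+\infty$, I would dampen it by $e^{-Ry}$: the function $H_R(y):=e^{-Ry}H(y)$ vanishes on $(-\infty,k)$ and decays like $e^{(1-R)y}$ on $[k,\infty)$, hence lies in $L^{1}(\mathbb{R})$ for $R>1$. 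Its Fourier transform $\hat{H}_R$ is then well-defined and Fourier inversion yields $H(z)=\tfrac{1}{2\pi}e^{Rz}\int_{\mathbb{R}}e^{-ixz}\hat{H}_R(x)\,dx$.

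Substituting $z=x_0+Y_T$, factoring out the deterministic $x_0$-dependence, and swapping expectation with the $dx$-integral by Fubini gives
\begin{eqnarray*}
C(x_0) &=& \frac{e^{Rx_0-rT}}{2\pi}\int_{\mathbb{R}}e^{-ix x_0}\,E_{\mathcal{Q}^{\theta}}\!\left[e^{(R-ix)Y_T}\right]\hat{H}_R(x)\,dx,
\end{eqnarray*}
and the inner expectation is precisely the analytic continuation $\varphi^{\theta}_{Y_T}(-iR-x)$, delivering the announced formula (\ref{eq:pricesa}). Existence of this continuation in the strip $\{\operatorname{Im}(u)=-R\}$ is guaranteed by the hypothesis $E_{\mathcal{Q}^{\theta}}[e^{RV_T}]<+\infty$, which via (\ref{eq:timechansol}) and the convexity of the Laplace exponent $l^{\theta}_V$ transfers to $E_{\mathcal{Q}^{\theta}}[e^{RY_T}]<+\infty$.

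The step I expect to be the main technical obstacle is the Fubini justification, since it requires the full integrand $x\mapsto|\varphi^{\theta}_{Y_T}(-iR-x)\hat{H}_R(x)|$ to lie in $L^{1}(\mathbb{R})$. Here the Brownian component of $V_t$ is decisive: by (\ref{eq:charfunctjdiffesscher}) it contributes a Gaussian factor $\exp\bigl(-\tfrac{\sigma^{2}}{4\alpha}(1-e^{-2\alpha T})x^{2}\bigr)$ to $|\varphi^{\theta}_{Y_T}(-iR-x)|$, which dominates both the $O(1/|x|)$ decay of $\hat{H}_R$ and the jump-dependent factors, the latter being controlled by the hypothesis $\varphi_{\xi_1}\in L^{1}(\mathbb{R})$ made in Section~3. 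The proof then closes with the elementary evaluation
\begin{eqnarray*}
\hat{H}_R(x) &=& \int_{k}^{\infty}\bigl(e^{(ix-R+1)y}-K e^{(ix-R)y}\bigr)\,dy,
\end{eqnarray*}
whose two primitives vanish at $+\infty$ when $R>1$ and, evaluated at $y=k$, combine to produce the stated closed-form expression in terms of $(ix-R)^{-1}$ and $(ix-R+1)^{-1}$.
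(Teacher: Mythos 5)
Your proof is correct and follows essentially the same route as the paper: damp the payoff by $e^{-Ry}$ with $R>1$, Fourier-invert, apply Fubini, and evaluate the transform in closed form; your explicit justification of the Fubini step via the Gaussian factor in $\varphi^{\theta}_{Y_T}$ is a welcome addition that the paper omits. One discrepancy: your integral $\int_{k}^{\infty}\bigl(e^{(ix-R+1)y}-Ke^{(ix-R)y}\bigr)\,dy$ yields $\hat{H}_R(x)=Ke^{(ix-R)k}\bigl(\tfrac{1}{ix-R}-\tfrac{1}{ix-R+1}\bigr)$, without the factors $e^{(ix-R)(k-x_0)}$ and $e^{-x_0}$ appearing in the statement, whereas the paper integrates from $k-x_0$ against $(e^y-e^{k})$, which is inconsistent with its own definition $H(y)=(e^y-K)_+$ and with the displayed chain $C=e^{-rT}E_{\mathcal{Q}^{\theta}}[H(Y_T+x_0)]$. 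Your version is the one compatible with the prefactor $e^{Rx_0-rT}e^{-ix_0x}$ in (\ref{eq:pricesa}), so you have in effect proved a corrected form of the proposition rather than the literal statement.
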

\begin{proof}
  We write $Y_T \sim \mathcal{Q}^{\theta}_{Y_T}(dx)$, where $Q^{\theta}_{Y_T}$ is the probability distribution of $Y_T$ under the EMM $\mathcal{Q}^{\theta}$.\\
 Denoting by   $H_R(x)= e^{-Rx}H(x) \in L^1(\mathbb{R})$   we have:
\begin{eqnarray*}
 C(x_0) &=&  e^{-rT}E_{\mathcal{Q}^{\theta}}[H(Y_T+x_0)]=  e^{-r T} \int_{\mathbb{R}}H(y+x_0)Q^{\theta}_{Y_T}(dy)\\
 &=&  e^{-r T} \int_{\mathbb{R}}e^{R(y+x_0)}H_R(y+x_0)Q_{Y_T}(dy)\\
 &=& \frac{1}{2 \pi} e^{R x_0-r T} \int_{\mathbb{R}} e^{Ry} \left[\int_{\mathbb{R}}e^{-i(y+x_0+(r-m)(1-e^{-\alpha T})))x}\hat{H}_R(x)\; dx \right] Q^{\theta}_{Y_T}(dy)\\
 &=& \frac{1}{2 \pi} e^{R x_0-r T}\int_{\mathbb{R}} e^{-i (x_0+(r-m)(1-e^{-\alpha T})) x} \left[\int_{\mathbb{R}}e^{(R-ix)y} Q^{\theta}_{Y_T}(dy) \right] \hat{H}_R(x)\;dx\\
 &=&  \frac{1}{2 \pi} e^{R x_0-r T} \int_{\mathbb{R}} e^{-i x_0 x}  \varphi^{\theta}_{Y_T}(-iR-x) \hat{H}_R(x)\;dx
  \end{eqnarray*}
On the other hand:
\begin{eqnarray*}
\hat{H}_R(x) &=& \int_{\mathbb{R}} e^{i x y} H_R(y)\;dy=\int_{k-x_0}^{+\infty} e^{(ix-R)y}(e^y-e^{k})\;dy\\
  &=& \int_{k-x_0}^{+\infty} e^{(ix-R+1)y}\;dy-e^{k} \int_{k-x_0}^{+\infty} e^{(ix-R)y}\;dy\\
  &=&-\frac{e^{(ix-R+1)(k-x_0)}}{ix-R+1}+e^{k}\frac{e^{(ix-R)(k-x_0)}}{ix-R}
\end{eqnarray*}
\end{proof}
The integral in equation (\ref{eq:pricesa}) is efficiently calculate by a fast  FFT approach. To this end we define the grids:
\begin{eqnarray*}
  x_k &=& -M+ \eta k,\;k=0,1,\ldots,n-1\\
  x_{0,j} &=& x_{0,m}+\delta j,\; j=0,1,\ldots,n-1
\end{eqnarray*}
over the domains of the integration variable and the initial log-prices respectively. Here $\eta=\frac{2M}{n}$ and $\delta=\frac{\pi}{M}$ are their corresponding lengths, while $n$ is the number of points on both grids, typically a power of two\\
We  apply the trapezoid rule after truncating the integral on the interval $[-M,M]$:
\begin{eqnarray*}
C(x_{0,j})&\simeq& \frac{1}{2 \pi} e^{R x_0-r T} \sum_{k=0}^{n-1} w_k e^{-i x_{0,j} x_k }\varphi^{\theta}_{Y_T}(-iR+M-\eta k) \hat{H}_R(-M+\eta k)\eta\\
&=& \frac{1}{2 \pi} e^{R x_0-r T} e^{i M (x_{0,m}+\delta j) }\sum_{k=0}^{n-1} w_k \varphi^{\theta}_{Y_T}(-iR+M-\eta k) \hat{H}_R(-M+\eta k)\eta e^{-i x_{0,m} \eta k }e^{-i \delta \eta jk }\\
&=& \frac{1}{2 \pi} e^{R x_0-r T} e^{i M (x_{0,m}+\delta j) }\sum_{k=0}^{n-1} w_k \varphi^{\theta}_{Y_T}(-iR+M-\eta k) \hat{H}_R(-M+\eta k)\eta e^{-i x_{0,m} \eta k }e^{-i \frac{2 \pi}{n} jk }\\
&=& \frac{1}{2 \pi} e^{R x_0-r T} e^{i M (x_{0,m}+\delta j)}\sum_{k=0}^{n-1} h_k e^{-i \frac{2 \pi}{n} jk }=\frac{1}{2 \pi} e^{R x_0-r T} e^{i M (x_{0,m}+\delta j)}fft(h_k)
\end{eqnarray*}
with:
\begin{equation*}
 h_k=  w_k \eta \varphi^{\theta}_{T_T}(-iR+M-\eta k) \hat{H}_R(-M+\eta k)
\end{equation*}
 $w_0=w_{n-1}=\frac{1}{2}$ and equal to one otherwise.\\
  The expression $fft(h_k)$ denotes the Fast fourier Transform of the sequence $(h_k)$.
  \section{Acknowledgments}
This research has been supported by the Natural Sciences and Engineering Research Council of Canada (NSERC).
\section{Conclusions}
We have introduced a model for the dynamic of the bitcoin-US dollar exchange that allows to capture important empirical features such as random jumps and mean-reverting properties. In addition, we have proposed a pricing method for European call options, adapting the well-known FFT approach for Levy processes. To this end we have given expressions for the characteristic function of log-returns of the exchanges and have established estimation procedures for the parameters in the model.
\section{Appendix}

\textbf{a) Moments for log-returns series}\\
\begin{eqnarray*}\nonumber
  D \varphi_{X_{j \Delta}}(u)  &=& D T_1(u) \varphi_{X_{j \Delta}}(u) \\
 D^2 \varphi_{X_{j \Delta}}(u)  &=& D^2 T_1(u) \varphi_{X_{j \Delta}}(u)+(D T_1(u))^2 \varphi_{X_{j \Delta}}(u)\\
 D^3 \varphi_{X_{j \Delta}}(u)  &=& D^3 T_1(u) \varphi_{X_{j \Delta}}(u)+3 D^2 T_1(u) D T_1(u)\varphi_{X_{j \Delta}}(u)+(D T_1(u))^3 \varphi_{X_{j \Delta}}(u)\\
 D^4 \varphi_{X_{j \Delta}}(u) &=& D^4 T_1(u) \varphi_{X_{j \Delta}}(u)+4 D^3 T_1(u) D T_1(u)\varphi_{X_{j \Delta}}(u)\\
 &+& 3(D^2 T_1(u))^2 \varphi_{X_{j \Delta}}(u)+ 6 (D T_1(u))^2 D^2 T_1(u)\varphi_{X_{j \Delta}}(u)+(D T_1(u))^4 \varphi_{X_{j \Delta}}(u)
 \end{eqnarray*}
 After evaluating at zero the first and second derivatives are:
  \begin{eqnarray*}
  D \varphi_{X_{j \Delta}}(0)  &=&  i \mu e^{-j \alpha \Delta}(1-e^{-\alpha \Delta})+\lambda (D K_1(0)+ D K_2(0))  \\
  D^2 \varphi_{X_{j \Delta}}(0)  &=& \lambda (D^2 K_1(0)+ D^2 K_2(0))  \\ \nonumber
  &-& \frac{\sigma ^2}{2 \alpha} (1-{\mathrm{e}}^{-2 \alpha  \Delta}+(1-{\mathrm{e}}^{-2 \alpha j \Delta })(1-e^{-\alpha \Delta})^2)  \\
& +& \left[ i \mu e^{-j \alpha \Delta}(1-e^{-\alpha \Delta})+\lambda (D K_1(0)+ D K_2(0))\right]^2
\end{eqnarray*}
while the third and forth ones are:
 \begin{eqnarray*}
 D^3 \varphi_{X_{j \Delta}}(0)  &=& \lambda (D^3 K_1(0)+ D^3 K_2(0)) +3 \left[\lambda (D^2 K_1(0)+ D^2 K_2(0)) \right. \\ \nonumber
  &-& \left. \sigma ^2 ((1-{\mathrm{e}}^{-2 \alpha(j+1)\Delta})+(1-{\mathrm{e}}^{-2 \alpha j \Delta })(e^{-\alpha \Delta}-1)^2) \right] \\
  && \left[ i \mu e^{-j \alpha \Delta}(1-e^{-\alpha \Delta})+\lambda (D K_1(0)+ D K_2(0)) \right] \\
 &+& \left[ i \mu e^{-j \alpha \Delta}(1-e^{-\alpha \Delta})+\lambda (D K_1(0)+ D K_2(0)) \right]^3
 \end{eqnarray*}
 \begin{eqnarray*}
  D^4 \varphi_{X_{j \Delta}}(0) &=& \lambda (D^4 K_1(0)+ D^4 K_2(0)) +4 \lambda (D^3 K_1(0)+ D^3 K_2(0))\\
  && \left[ i \mu e^{-j \alpha \Delta}(1-e^{-\alpha \Delta})+\lambda (D K_1(0)+ D K_2(0))  \right]\\
 &+& 3 \left(\lambda (D^2 K_1(0)+ D^2 K_2(0)) \right.  \\ \nonumber
  &-& \left.  \sigma ^2 ((1-{\mathrm{e}}^{-2 \alpha(j+1)\Delta})+(1-{\mathrm{e}}^{-2 \alpha j \Delta })(e^{-\alpha \Delta}-1)^2) \right)^2\\
   &+& 6 \left(i \mu e^{-j \alpha \Delta}(1-e^{-\alpha \Delta})+\lambda (D K_1(0)+ D K_2(0)) \right)^2 \left[\lambda (D^2 K_1(0)+ D^2 K_2(0)) \right. \\ \nonumber
  &-& \left. \sigma ^2 ((1-{\mathrm{e}}^{-2 \alpha(j+1)\Delta})+(1-{\mathrm{e}}^{-2 \alpha j \Delta })(e^{-\alpha \Delta}-1)^2) \right]\\
   &+&  \left[ i \mu e^{-j \alpha \Delta}(1-e^{-\alpha \Delta})+\lambda (D K_1(0)+ D K_2(0))  \right]^4
 \end{eqnarray*}
From proposition \ref{prop:logretchf} :
\begin{eqnarray*}
  E( X_{j \Delta})&=&  \mu e^{-\alpha j \Delta}(1-e^{-\alpha  \Delta})-i \lambda (D K_1(0)+ D K_2(0))\\
  E( X^2_{j \Delta}) &=& -\left[\lambda (D^2 K_1(0)+ D^2 K_2(0)) \right. \\
  &-& \left. \frac{\sigma ^2}{2 \alpha} (1-{\mathrm{e}}^{-2 \alpha  \Delta}+(1-{\mathrm{e}}^{-2 \alpha j \Delta })(1-e^{-\alpha \Delta})^2)  \right]\\
  &+& \left[ i \mu e^{-j \alpha \Delta}(1-e^{-\alpha \Delta})-i\lambda (D K_1(0)+ D K_2(0)) \right]^2
  \end{eqnarray*}

  \begin{eqnarray*}
  E( X^3_{j \Delta}) &=&   i \lambda (D^3 K_1(0)+ D^3 K_2(0))-3 \left[  \mu e^{-j \alpha \Delta}(1-e^{-\alpha \Delta})- i \lambda (D K_1(0)+ D K_2(0)) \right] \\
&&  \left[\lambda (D^2 K_1(0)+ D^2 K_2(0))- \frac{\sigma ^2}{2 \alpha} (1-{\mathrm{e}}^{-2 \alpha  \Delta}+(1-{\mathrm{e}}^{-2 \alpha j \Delta })(1-e^{-\alpha \Delta})^2)  \right]\\
  &+&  \left[  \mu e^{-j \alpha \Delta}(1-e^{-\alpha \Delta})-i\lambda (D K_1(0)+ D K_2(0)) \right]^3 \\
  E( X^4_{j \Delta}) &=& \lambda(D^4 K_1(0)+ D^4 K_2(0))+4 \lambda (D^3 K_1(0)+ D^3 K_2(0))\\
   && \left[ i \mu e^{-j \alpha \Delta}(1-e^{-\alpha \Delta})+\lambda (D K_1(0)+ D K_2(0)) \right]\\
  &+& 3 \left( \lambda (D^2 K_1(0)+ D^2 K_2(0))  \right. \\ \nonumber
  &-& \left.  \frac{\sigma ^2}{2 \alpha} (1-{\mathrm{e}}^{-2 \alpha  \Delta}+(1-{\mathrm{e}}^{-2 \alpha j \Delta })(1-e^{-\alpha \Delta})^2) \right)\\
  &+& 6  \left[\lambda (D^2 K_1(0)+ D^2 K_2(0)) \right. \\ \nonumber
  &-& \left.  \frac{\sigma ^2}{2 \alpha} (1-{\mathrm{e}}^{-2 \alpha  \Delta}+(1-{\mathrm{e}}^{-2 \alpha j \Delta })(1-e^{-\alpha \Delta})^2)  \right]\\
  && \left[ i \mu e^{-\alpha j \Delta}(1-e^{-\alpha  \Delta})+\lambda (D K_1(0)+ D K_2(0)) \right]^2\\
  &+& \left[i \mu e^{-\alpha j \Delta}(1-e^{-\alpha  \Delta})+\lambda(D K_1(0)+ D K_2(0)) \right]^4
 \end{eqnarray*}
 Combining with equation (\ref{eq:chfjumpdiffzero}), the first four moments of the log-returns are:
\begin{eqnarray*}
  E( X_{j \Delta})&=&  \frac{1}{i} D T_1(0)= \mu e^{-j \alpha \Delta}(1-e^{-\alpha \Delta})+  \lambda   \frac{E(\xi)}{ \alpha} E_{j,1}(\alpha) \\
  E( X^2_{j \Delta}) &=& D^2 T_1(0) +(D T_1(0))^2 \\
  &=& - \frac{E_{j,2}(\alpha)}{2 \alpha} \left(\lambda E(\xi^2) +\sigma ^2 \right) + \left(i \mu e^{-j \alpha \Delta}(1-e^{-\alpha \Delta})-i \lambda   \frac{E(\xi)}{ \alpha} E_{j,1} \right)^2\\
  &=& - \frac{E_{j,2}(\alpha)}{2 \alpha} \left(\lambda E(\xi^2) +\sigma ^2 \right) - \left( \mu e^{-j \alpha \Delta}(1-e^{-\alpha \Delta})- \lambda   \frac{E(\xi)}{ \alpha} E_{j,1} \right)^2
  \end{eqnarray*}

  \begin{eqnarray*}
 E( X^3_{j \Delta}) &=& i(D^3 T_1(0)+3 D^2 T_1(0) D T_1(0)+(D T_1(0))^3) \\
 &=& i \left[i^3 \lambda  \frac{E(\xi^3)}{3 \alpha} E_{j,3}(\alpha)-  \frac{3 E_{j,2}(\alpha)}{2 \alpha} \left(\lambda E(\xi^2) +\sigma ^2 \right)\left(i \mu e^{-j \alpha \Delta}(1-e^{-\alpha \Delta})-i \lambda   \frac{E(\xi)}{ \alpha} E_{j,1}(\alpha) \right) \right.\\
 &+& \left. \left(  i \mu e^{-j \alpha \Delta}(1-e^{-\alpha \Delta})-i \lambda   \frac{E(\xi)}{ \alpha} E_{1,k}(\alpha) \right)^3 \right]\\
 &=&\lambda  \frac{E(\xi^3)}{3 \alpha} E_{j,3}(\alpha)+  \frac{3 E_{j,2}(\alpha)}{2 \alpha} \left(\lambda E(\xi^2) +\sigma ^2 \right)\left( \mu e^{-j \alpha \Delta}(1-e^{-\alpha \Delta})- \lambda   \frac{E(\xi)}{ \alpha} E_{j,1}(\alpha) \right) \\
 &+&  \left(  \mu e^{-j \alpha \Delta}(1-e^{-\alpha \Delta})- \lambda   \frac{E(\xi)}{ \alpha} E_{1,k}(\alpha) \right)^3
 \end{eqnarray*}

 \begin{eqnarray*}
  E( X^4_{j \Delta})  &=&  \lambda \frac{E(\xi^4)}{4 \alpha} E_{j,4}(\alpha) -  \frac{4 i^3 \lambda E(\xi^3)}{3 \alpha} E_{j,3}\left(i \mu e^{-j \alpha \Delta}(1-e^{-\alpha \Delta})-i \lambda   \frac{E(\xi)}{ \alpha} E_{j,1}(\alpha) \right)\\
 &+& \frac{3}{4 \alpha^2} \left(\lambda E(\xi^2) + \sigma ^2 \right)^2E^2_{j,2}(\alpha) \\
 &-&  \frac{3}{ \alpha} E_{j,2} \left(i \mu e^{-j \alpha \Delta}(1-e^{-\alpha \Delta})-i \lambda   \frac{E(\xi)}{ \alpha} E_{j,1} \right)^2  \left(\lambda E(\xi^2) + \sigma ^2 \right)\\
 &+& \left[i \mu e^{-j \alpha \Delta}(1-e^{-\alpha \Delta})-i \lambda   \frac{E(\xi)}{ \alpha} E_{j,1}(\alpha) \right]^4\\
  &=&  \lambda \frac{E(\xi^4)}{4 \alpha} E_{j,4}(\alpha) -  \frac{4  \lambda E(\xi^3)}{3 \alpha} E_{j,3}\left( \mu e^{-j \alpha \Delta}(1-e^{-\alpha \Delta})- \lambda   \frac{E(\xi)}{ \alpha} E_{j,1}(\alpha) \right)\\
 &+& \frac{3}{4 \alpha^2} \left(\lambda E(\xi^2) + \sigma ^2 \right)^2E^2_{j,2}(\alpha) \\
 &+&  \frac{3}{ \alpha} E_{j,2} \left( \mu e^{-j \alpha \Delta}(1-e^{-\alpha \Delta})- \lambda   \frac{E(\xi)}{ \alpha} E_{j,1} \right)^2  \left(\lambda E(\xi^2) + \sigma ^2 \right)\\
 &+& \left[ \mu e^{-j \alpha \Delta}(1-e^{-\alpha \Delta})- \lambda   \frac{E(\xi)}{ \alpha} E_{j,1}(\alpha) \right]^4
 \end{eqnarray*}
\textbf{b) Maximum likelihood equations for a MRBSch model}
  \begin{eqnarray*}
       \frac{\partial l}{\partial \alpha}  &=&  -\frac{n}{2 \alpha}+\frac{1}{2}\sum_{j=1}^ n  \frac{\partial E_{j,2}(\alpha)}{\partial \alpha}- \frac{2 }{\sigma^2}\sum_{j=1}^ n  \frac{(x_{j\Delta}-\mu_{j,\beta}(\alpha))^2}{E^2_{j,2}(\alpha)}\\
   &-& \frac{2 \alpha}{\sigma^2}\left[-2\sum_{j=1}^ n  \frac{(x_{j\Delta}-\mu_{j,\beta}(\alpha))\frac{\partial \mu_{j,\beta}(\alpha)}{\partial \alpha}}{E_{j,2}(\alpha)}-\sum_{j=1}^ n  \frac{\frac{\partial E_{j,2}(\alpha)}{\partial \alpha}(x_{j\Delta}-\mu_{j,\beta}(\alpha))^2}{E^2_{j,2}(\alpha)} \right]
   \end{eqnarray*}
   After elementary algebraic manipulations:
   \begin{eqnarray*}
       \frac{\partial l}{\partial \alpha} &=&-\frac{n}{2 \alpha}+\frac{1}{2}\sum_{j=1}^ n  \frac{\partial E_{j,2}(\alpha)}{\partial \alpha}- \frac{2 }{\sigma^2}\sum_{j=1}^ n  \frac{(x_{j\Delta}-\mu_{j,\beta}(\alpha))^2}{E^2_{j,2}(\alpha)}\\
         &+& \frac{4 \alpha}{\sigma^2}\sum_{j=1}^ n  \frac{(x_{j\Delta}-\mu_{j,\beta}(\alpha))\frac{\partial \mu_{j,\beta}(\alpha)}{\partial \alpha}}{E_{j,2}(\alpha)}+ \frac{4 \alpha}{\sigma^2}\sum_{j=1}^ n  \frac{\frac{\partial E_{j,2}(\alpha)}{\partial \alpha}(x_{j\Delta}-\mu_{j,\beta}(\alpha))^2}{E^2_{j,2}(\alpha)} \\
       &=&-\frac{n}{2 \alpha}+\frac{1}{2}\sum_{j=1}^n  \frac{\partial E_{j,2}(\alpha)}{\partial \alpha}- \frac{2}{\sigma^2}\sum_{j=1}^n  \frac{(x_{j\Delta}-\mu (1-e^{-\alpha  \Delta})e^{-\alpha j \Delta})^2}{E_{j,2}(\alpha)}\\
   &+& \frac{4 \alpha}{\sigma^2}\sum_{j=1}^n  \frac{(x_{j\Delta}-\mu (1-e^{-\alpha  \Delta})e^{-\alpha j \Delta})\mu((j+1) \Delta e^{-\alpha (j+1)  \Delta} -j \Delta e^{-\alpha j \Delta})}{E_{j,2}(\alpha)}\\
   &+& \frac{4 \alpha}{\sigma^2}\sum_{j=1}^ n  \frac{\frac{\partial E_{j,2}(\alpha)}{\partial \alpha}(x_{j\Delta}-\mu (1-e^{-\alpha \Delta})e^{-\alpha j \Delta})^2}{E^2_{j,2}(\alpha)}
    \end{eqnarray*}
    Hence:
     \begin{eqnarray*}
       \frac{\partial l}{\partial \alpha} &=&-\frac{n}{2 \alpha}+\frac{1}{2}\sum_{j=1}^n  \frac{\partial E_{j,2}(\alpha)}{\partial \alpha}- \frac{2}{\sigma^2}\sum_{j=1}^n  \frac{(x_{j\Delta}-\mu (1-e^{-\alpha  \Delta})e^{-\alpha j \Delta})^2}{E_{j,2}(\alpha)}\\
      &+& \frac{4 \alpha \mu \Delta}{\sigma^2}\sum_{j=1}^n  \frac{x_{j\Delta}((j+1)  e^{-\alpha (j+1)  \Delta} -j  e^{-\alpha j \Delta})}{E_{j,2}(\alpha)}\\
   &-& \frac{4 \alpha \mu^2 \Delta}{\sigma^2}(1-e^{-\alpha  \Delta})\sum_{j=1}^n  \frac{ ((j+1)  e^{-\alpha (2j+1)  \Delta} -j e^{-2 \alpha j \Delta})}{E_{j,2}(\alpha)}\\
   &+& \frac{4 \alpha}{\sigma^2}\sum_{j=1}^ n  \frac{\frac{\partial E_{j,2}(\alpha)}{\partial \alpha}(x_{j\Delta}-\mu (1-e^{-\alpha \Delta})e^{-\alpha j \Delta})^2}{E^2_{j,2}(\alpha)}\\
    &=&-\frac{n}{2 \alpha}+\frac{n}{2}\overline{\left(\frac{\partial E_{j,2}(\alpha)}{\partial \alpha}\right)}
    - \frac{2n}{\sigma^2} \overline{\left( \frac{(x_{j\Delta}-\mu (1-e^{-\alpha  \Delta})e^{-\alpha j \Delta})^2}{E_{j,2}(\alpha)}\right)}\\
      &+& \frac{4 n \alpha \mu \Delta}{\sigma^2}\left[\overline{\left(\frac{(j+1) x_{j\Delta} e^{-\alpha (j+1)  \Delta} }{E_{j,2}(\alpha)}\right)}-\overline{\left(\frac{j  e^{-\alpha j \Delta}}{E_{j,2}(\alpha)}\right)}\right]\\
   &-& \frac{4 n \alpha \mu^2 \Delta}{\sigma^2}(1-e^{-\alpha  \Delta})\left[\overline{\left( \frac{ (j+1)  e^{-\alpha (2j+1)  \Delta}}{E_{j,2}(\alpha)}\right)}- \overline{\left( \frac{ j e^{-2 \alpha j \Delta}}{E_{j,2}(\alpha)}\right)} \right]\\
   &+& \frac{4 n \alpha}{\sigma^2}\overline{\left(  \frac{\frac{\partial E_{j,2}(\alpha)}{\partial \alpha}(x_{j\Delta}-\mu (1-e^{-\alpha \Delta})e^{-\alpha j \Delta})^2}{E^2_{j,2}(\alpha)}\right)}
      \end{eqnarray*}

\end{document}